\documentclass{article}

\usepackage{microtype}
\usepackage{graphicx}
\usepackage{subfigure}
\usepackage{booktabs} %
\usepackage{chngcntr}

\usepackage{hyperref}
\usepackage{parskip}
\usepackage{amsfonts}
\usepackage{mathrsfs}
\usepackage{authblk}
\usepackage{bbm}

\usepackage[svgnames]{xcolor}

\usepackage[normalem]{ulem}
\usepackage{bm}
\usepackage{mathtools}
\usepackage{booktabs}
\usepackage{mdwlist}
\usepackage{caption}
\usepackage{array}

\usepackage{amsmath}
\usepackage{amssymb}
\usepackage{mathtools}
\usepackage{amsthm}

\usepackage[capitalize,noabbrev]{cleveref}

\usepackage[accepted]{icml2025}

\makeatletter
\renewcommand{\fnum@figure}{\textbf{Fig. \thefigure}}
\makeatother

\theoremstyle{plain}
\newtheorem{theorem}{Theorem}
\newtheorem{proposition}{Proposition}
\newtheorem{lemma}{Lemma}

\theoremstyle{definition}
\newtheorem{definition}{Definition}

\theoremstyle{remark}

\newcommand{\denselist}{\itemsep 0pt\partopsep 0pt}

\newsavebox\FrameBox
\newenvironment{Frame}{%
  \centering\par\setbox\FrameBox\hbox\bgroup\minipage{0.8\textwidth}\parskip\baselineskip\ignorespaces
}{%
  \endminipage\egroup\fbox{\box\FrameBox}\par
}

\newcommand{\score}[0]{f}
\newcommand{\engscore}[0]{\score_{\text{eng}}}
\newcommand{\da}[0]{\score_{\text{DA}}}
\newcommand{\mda}[0]{\score_{\text{MDA}}}

\newcommand{\greedyset}[0]{S^{\text{Greedy}}_{JR}}
\newcommand{\pgreedy}[0]{P^{\text{Greedy}}}
\DeclareMathOperator*{\argmax}{arg\,max}
\newcommand{\classifier}[0]{f_C}

\usepackage[textsize=tiny]{todonotes}

\icmltitlerunning{Representative Ranking for Deliberation in the Public Sphere}

\begin{document}

\twocolumn[
\icmltitle{Representative Ranking for Deliberation in the Public Sphere}

\icmlsetsymbol{equal}{*}

\begin{icmlauthorlist}
\icmlauthor{Manon Revel}{equal,fair,bkc}
\icmlauthor{Smitha Milli}{equal,fair}
\icmlauthor{Tyler Lu}{meta}
\icmlauthor{Jamelle Watson-Daniels}{fair}
\icmlauthor{Maximilian Nickel}{fair}
\end{icmlauthorlist}

\icmlaffiliation{fair}{Meta FAIR, New York, NY, USA}
\icmlaffiliation{meta}{Meta, New York, NY, USA}
\icmlaffiliation{bkc}{Berkman Klein Center, Harvard University, Cambridge, MA, USA}

\icmlcorrespondingauthor{Manon Revel}{manon@meta.com}
\icmlcorrespondingauthor{Smitha Milli}{smilli@meta.com}

\icmlkeywords{Machine Learning, ICML}

\vskip 0.3in
]
\printAffiliationsAndNotice{\icmlEqualContribution} %

\begin{abstract}
   Online comment sections, such as those on news sites or social media, have the potential to foster informal public deliberation, However, this potential is often undermined by the frequency of toxic or low-quality exchanges that occur in these settings. To combat this, platforms increasingly leverage algorithmic ranking to facilitate higher-quality discussions, e.g., by using civility classifiers or forms of prosocial ranking. Yet, these interventions may also inadvertently reduce the visibility of legitimate viewpoints, undermining another key aspect of deliberation: representation of diverse views. We seek to remedy this problem by introducing guarantees of representation into these methods. In particular, we adopt the notion of \emph{justified representation} (JR) from the social choice literature and incorporate a JR constraint into the comment ranking setting. We find that enforcing JR leads to greater inclusion of diverse viewpoints while still being compatible with optimizing for user engagement or other measures of conversational quality.
\end{abstract}

\section{Introduction}
In theories of deliberative democracy~\citep{habermas1991structural,cohen2005deliberation,fishkin2009,bachtiger2018oxford}, thoughtful and constructive public discourse is key for a well-functioning polity. Online platforms, such as comment sections on news sites or social media, present an opportunity to expand this kind of public sphere deliberation~\citep{helberger2019democratic,landemore2024}. These digital spaces enable interactions between individuals who might never meet in person, potentially exposing them to a broader array of viewpoints. However, in reality, these online discussions often deteriorate into low-quality or toxic exchanges~\citep{nelson2021,kim2021}. In this work, we aim to bring online conversations closer towards the ideals of deliberation, leveraging algorithmic tools in doing so.

What does it take for a discussion to count as \emph{deliberative}? Deliberative democracy ideals generally demand at minimum (i) certain aspects of conversational quality (e.g., respect, reasoned arguments, etc) \emph{and} (ii) the representation of diverse voices. For example, among other ideals, \citet{fishkin2005experimenting} claim that deliberative discussion must be (i) \emph{conscientious}, i.e., ``the participants should be willing to talk and listen with civility and respect,'' and (ii) \emph{comprehensive}, i.e., ``all points of view held by significant portions of the population should receive attention.'' Similarly, in their review of the `second generation of deliberative ideals,' ~\citet{bachtiger2018oxford} highlight the ongoing importance of (i) respect, and (ii) the evolving view of \emph{equality} as the equality of opportunity to political influence~\citep{knight1977}, which requires that people with diverse viewpoints be able to receive attention to their perspectives.

Algorithmic moderation and ranking have become key tools for facilitating conversational quality in online discussions~\citep{kolhatkar2020classifyingconstructivecomments}. For example, Google Jigsaw's Perspective API comment classifiers have been used by prominent news organizations, such as the New York Times and Wall Street Journal, to filter or rerank comments to improve civility~\citep{perspective_api2022,saltz2024}. More recently, a line of work on \emph{bridging systems} advocating ranking aims at \emph{bridging} different perspectives and reducing polarization~\citep{ovadya2023}. Bridging-based ranking has been used to in several high-impact applications such as ranking crowd-sourced fact checks on X, Instagram, Threads, Facebook, and TikTok~\citep{wojcik2022,meta_cn_2025,TikTok_Newsroom_2025} and selecting comments in collective response systems like Pol.is and Remesh~\citep{small2021,konya2023_collective_dialogues,huang2024}.

\begin{figure*}
    \centering
\includegraphics[width=0.95\linewidth]{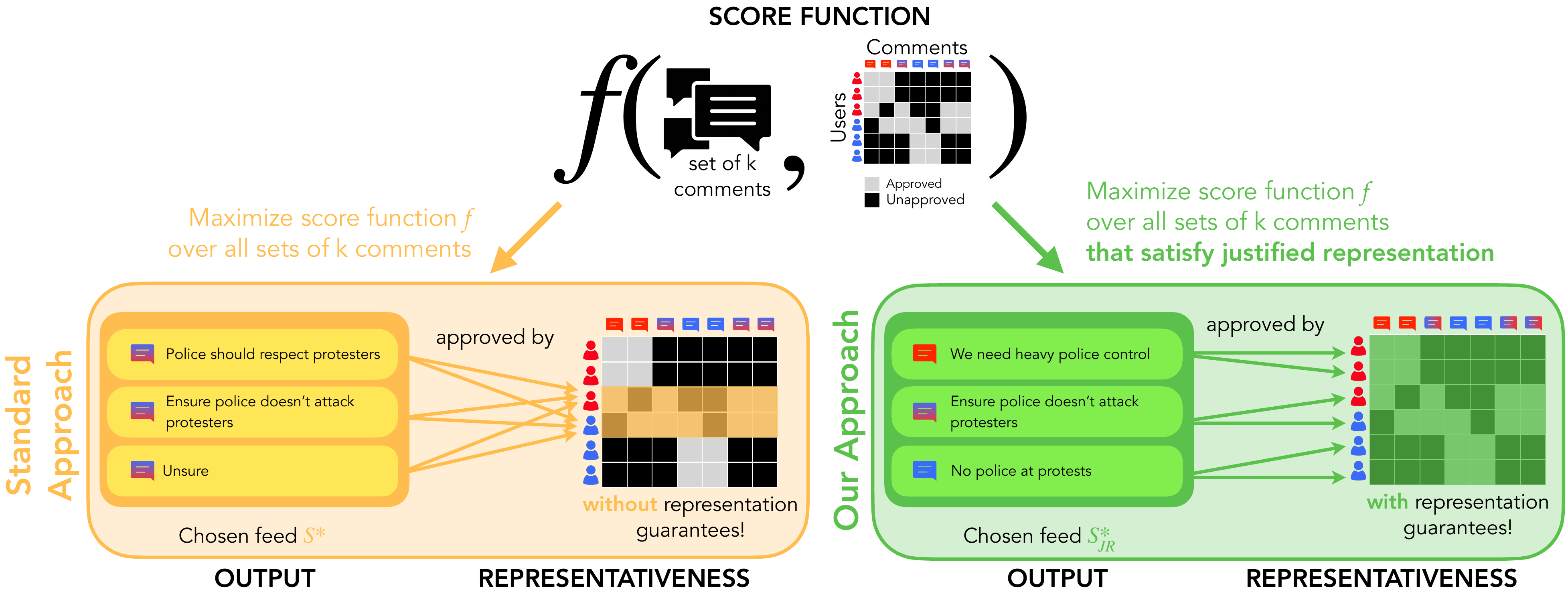}
    \caption{\textbf{Our approach compared to the standard approach to ranking comments.} In this example, a platform with $n=6$ users wants to select $k=3$ comments to highlight while optimizing a given score function $f$ (e.g. engagement, civility, diverse approval, etc). This example is inspired by our experiments ranking comments related to campus protests in Sec \ref{sec:exp}. In the standard approach, the platform simply selects the $k$ comments with the highest score (in this case, diverse approval across the red and blue users). However, this leads to only two users having a comment that they approve of in the selected set $S^*$. The two remaining red and blue users do not receive representation despite the fact that they each form a group large enough to warrant representation by proportionality (each of size $\lceil n/k \rceil \geq \lceil 6/3 \rceil = 2$) and are minimally cohesive (have at least one item that they agree upon in common). On the other hand, in our approach, the platform picks the $k$ comments that maximize the score while satisfying \emph{justified representation}, and guarantees that all cohesive groups of size at least $n/k$ are represented in the resulting set $S^*_{JR}$.}
    \label{fig:front-figure}
\end{figure*}

While these interventions may improve conversational quality, the goal of ensuring representation in online discussions is less studied. In fact, solely focusing on conversational quality might inadvertently result in the censorship of certain groups, in the sense that their comments may be filtered out or do not appear high enough in the comment ranking to gain visibility. For example, civility classifiers have been shown to disproportionately flag comments written in African-American English~\citep{sap-etal-2019-risk} and comments discussing encounters with racism~\citep{lee2024people}. In the context of diverse approval, the groups being bridged have a large effect on what content gets shown. Commonly, these groups correspond to political groups like left- or right-leaning users~\citep{wojcik2022}. Therefore, it is plausible that diverse approval could give more visibility to moderate comments, while failing to provide representation of others who hold more ideologically diverse views.\footnote{Or, imagine an online platform popular with fans of the Celtics and Knicks (two U.S. basketball teams) where the only posts that get diverse approval are those about Kadeem Allen (a former player in both franchises). Bridging-based ranking à la~\citet{ovadya2023} may overly represent the Kadeem Allen posts, downgrading legitimate interest of the broader fan-bases.}

In this paper, our goal is to broaden the scope of deliberative ideals examined in the algorithmic facilitation of online discussions. In particular, we extend the existing focus on conversational quality to also include ideals of representation. In brief, our contributions are the following: %
\begin{enumerate}
    \item \textbf{Theoretical framework.} We contribute a theoretical framework in which the platform selects the top-ranked comments by maximizing a score function $f$ (e.g. civility, bridging objectives) subject to a representation constraint known as \emph{justified representation} (JR)~\citep{aziz2017justified} from the social choice literature on approval voting in multi-winner elections.
    \item \textbf{Theoretical analysis.} We theoretically analyze how the JR constraint affects the platform's ability to maximize different classes of score functions. Despite negative results in worst-case settings, we show that in the natural setting where users' opinions cluster into a few groups, enforcing representation does not come at a price to other objectives such as civility or user engagement (\S~\ref{sec:polarized}). 
    \item \textbf{Method and empirical analysis.} Leveraging approximation algorithms from social choice, we implement a JR constraint for ranking comments on Remesh\footnote{Remesh~\citep{konya2023_collective_dialogues} is a popular collective response system~\citep{ovadya2023generativecicollectiveresponse} used frequently by governments, non-profits, and corporations to elicit opinions from a collective at scale.} related to campus protests. We find that enforcing JR significantly enhances representation without compromising other measures of conversational quality. Specifically, we evaluate the price of JR for metrics beyond simple engagement, including diverse approval and content-based bridging classifiers (\S~\ref{sec:exp}), and find a low price in all settings, as predicted by our theoretical results (\S~\ref{sec:polarized}).%
\end{enumerate}

\section{Related work}
Unlike aggregative models of electoral democracy~\citep{manin1997principles}, which rely on citizens' periodic votes to express political consent~\citep{landemore2020open}, deliberative democracy anchors political legitimacy in discursive processes through which citizens continuously participate and consent to the polity's activity~\citep{habermas2015between}. Deliberative democracy scholars have identified certain ideals that democratic deliberation should strive to uphold (e.g. respect, absence of coercive power, etc), and \citet{bachtiger2018oxford} review how conceptualizations of these ideals  have evolved over the years. More recently, scholars have begun exploring the potential for online spaces to democratize and scale deliberative discourse~\citep{buchstein1997bytes,helberger2019democratic,gelauff2023achieving}.

Algorithmic interventions for comment ranking focus primarily on promoting conversational quality  ~\citep{kolhatkar-taboada-2017-constructive,saltz2024,perspective_api2022,ovadya2023,jia2024embedding,piccardi2024socialmediaalgorithmsshape}, but without guarantees on representation of viewpoints---another key ideal of deliberation. To incorporate such guarantees into these methods, we focus on selecting the \emph{set} of top $k$ comments, which allows us to leverage notions of proportional representation. While other research on diversity in recommender systems also focuses on selecting optimal \emph{sets} of items, diversity in this context is typically defined based on the content of items, e.g., choosing videos from a variety of topics \citep{kunaver2017, zhao2025}. In contrast, our goal here is to select a set of comments that ensures \emph{representation}, thereby focusing on the perspective of users rather than the content diversity of the items.

To formalize representation, we borrow the social-choice-theoretic concept of \textit{justified representation} (JR), developed by \citet{aziz2017justified} in the context of approval voting in multi-winner elections. Justified representation is an axiom that formalizes a notion of proportionality: it guarantees that every large enough group of users that has shared preferences (approves of at least one comment in common) is allocated at least one of the top comments. While prior work has focused on selecting or generating comments subject to JR or extensions of JR~\citep{halpern2023representation,fish2024,bernreiter2024}, these works do not consider the optimization of other exogenous objectives (e.g. content-based classifiers for civility or other attributes) that are a key component of real-world comment ranking. Therefore, we study a constrained optimization problem where the platform selects the top comments based on optimizing a score function (e.g. civility, diverse approval, engagement, etc) subject to JR.

For the comment ranking setting, in particular, JR affords several advantages over notions of demographic or social representation more commonly used in the algorithmic fairness literature~\citep{barocas-hardt-narayanan,chasalow2021}. We discuss these advantages further in \Cref{sec:jr-fairness}, but here, we highlight two benefits. First, unlike most algorithmic fairness methods, JR does not require inference of any demographic or social labels, making it compatible with the privacy and legal constraints of real-world platforms~\citep{holstein2019,veale2017}. Second, JR automatically adapts to the groups that are relevant in different contexts. For instance, the groups relevant to represent in the discussions about the NYC budget planning process differ from those in a post about a basketball game between the Celtics and the Knicks. On real-world platforms that have a diverse range of content, the fact that JR naturally focuses on the relevant groups is a significant advantage.

 The primary theoretical question we analyze is: what is the price of enforcing JR, with respect to the ability to optimize other score functions? In the recommender system context, it is essential that a representation constraint be compatible with other objectives of interest for the platform (e.g. civility of comments). The need to optimize other objectives well is also why we start with the fundamental JR axiom rather than its stronger extensions~\citep{aziz2017justified,pjr_2017,skowron2017,brill2023,fish2024,welfarism2020}. Nevertheless, in our empirical experiments, we find that all but one of our JR committees also satisfies the stronger EJR+ axiom~\citep{brill2023}, consistent with other studies that find minimal differences between these axioms in real-world settings~\citep{boehmer2024approval}.\footnote{See also \citet{Bardal_Brill_McCune_Peters_2025} for similar findings beyond the approval voting setting.} 
 
 The closest related theoretical work is by \citet{elkind2022price,elkind2024} who investigate the price of JR with respect to utilitarian social welfare (the number of selected comments that each user approves of, summed over users). Other related work includes \citet{skowron2017} who study the performance of various approval-based committee rules with respect to social welfare and coverage and \citet{bredereck2019experimental} who establish that maximizing social welfare or coverage with respect to JR is NP-hard. In contrast to \citet{elkind2022price}, we analyze the price of JR for more general classes of score functions, which are relevant in the context of recommender systems where objectives beyond user approval are also important, as well as constraints on the user approval matrix. We show that in the common setting when users' preferences cluster into a few groups, the price of JR, even with respect to arbitrary objective functions, is typically negligible (\S~\ref{sec:polarized}).  These findings also relate to \citet{faliszewski2018achieving} who use clustering as an empirical heuristic to achieve fully proportional representation.

 \section{Problem Setting}
 \label{sec::framework}
In this section, we define the comment ranking problem, the axiom of justified representation and the price function used to examine how ensuring representation might affect the ability to maximize other objectives.

\subsection{Model} For $m \in \mathbb{N},$ we write $[m] = \{1, \dots, m\}.$ Let $[m]$ be the set of all comments and $[n]$ be a set of users. The platform must select a subset $S \subseteq [m]$ of $k$ comments to highlight.\footnote{Work on diversity in recommender systems also focuses on sets of items, but with the goal of showing ``diverse'' items where diversity is typically defined based on the content of the items, e.g., showing videos spanning different topics or genres~\citep{kunaver2017,zhao2025}. In contrast, our focus is on selecting sets of items (comments) to ensure that the top items provide a degree of proportional representation to users.} Each user $u \in [n]$ approves a set $A_{u} \subseteq [m]$ of comments and $\mathcal{A}_n = \{A_1, \dots, A_n\}$ is the profile of approved comments across all users. In practice, a user's approval set might be defined as the set of comments that they upvote, like or react to. In all, each problem is characterized by a tuple $\mathcal{I}_{m, \mathcal{A}_n, k} = \langle m, \mathcal{A}_n, k \rangle$ of $n$ approval sets over $m$ comments, from which $k$ comments are selected. 

\subsection{Scoring Rule}

Comments are ranked based on a scoring rule: each comment $i\in [m]$ is assigned a score $\score(i, \mathcal{A}_n) \geq 0$.  The score of a set $S \subseteq [m]$ is additive; $\score(S, \mathcal{A}_n) = \sum_{i\in S} \score(i, \mathcal{A}_n).$ For notational simplicity, we occasionally write $\score(i)$ or $\score(S)$ and drop the potential dependence on the approval profile $\mathcal{A}_n$. We use $S^*\left(\mathcal{I}_{m, \mathcal{A}_n, k}\right)$ to denote the\footnote{If there are multiple highest-score sets, we pick one randomly. Our results do not depend on the tie-breaking rule for $S^*$.} subset of $[m]$ with highest score for the instance $\mathcal{I}_{m, \mathcal{A}_n, k}:$
\begin{equation}\score(S^*\left(\mathcal{I}_{m, \mathcal{A}_n, k}\right)) = \max_{S \subseteq [m], |S| = k} \score(S).\end{equation}

\subsubsection{General Scoring Rules}
In general, we consider arbitrary scoring rules $\score$ that are potentially independent of the approval profile $\mathcal{A}_n$. For example, a scoring rule $\classifier$ could be a classifier that assigns a score to a comment based solely on its textual content. A notable example are the Perspective API classifiers, a suite of models that have been used by prominent news organizations like the New York Times and Wall Street Journal, for algorithmic filtering and ranking of comments. The Perspective API models assess comments for attributes such as toxicity, compassion, reasoning, and more, relying solely on the comment text for classification.

\subsubsection{Approval Dependent Scoring Rules} \label{sec:approval-scoring-rules}
We also consider a natural class of scoring rules that are \emph{approval dependent} in that the score of comment $i$ cannot decrease if a new user approves of it, holding all else equal. Formally, we write $A'_u \succeq_{(i)} A_u$ if $A_u' = A_u \cup \{i\}$. Then, we define an approval dependent scoring rule as the following.

\begin{definition}[Approval Dependent]
    A scoring rule $\score(i, \mathcal{A}_n)$ is approval dependent if, for all items $i$ not in any approval set of $\mathcal{A}_n$ (that is $i \notin \bigcup_{u=1}^n A_u$), then we have $\score(i, \mathcal{A}_n) = 0$.
    \label{def:a-m}
\end{definition}

\emph{Engagement.} A simple example of an approval dependent scoring function is the utilitarian scoring rule in which each item's score equals the total number of users that approve it:
\begin{align}
    \engscore(i, \mathcal{A}_n) = |\{u : i \in A_u \}|\,. \label{eq:eng}
\end{align}
In prior work rooted in the multi-winner election setting, the utilitarian scoring rule has been referred to as `social welfare'~\citep{elkind2022price}. In the context of comment ranking, the approval profile $\mathcal{A}_n$ would, in practice, likely be defined by user engagement. For example, a user might be said to `approve' of a comment if they upvote or like it. Therefore, in this setting, it may be more accurate to think of the utilitarian scoring rule as an engagement-maximizing scoring rule that social media platforms are incentivized to optimize for. As such, we refer to this scoring rule as the engagement scoring rule.

\textit{Diverse approval} is another example of an approval dependent scoring rule, where the score of a comment reflects the level of approval it receives across diverse groups of users. These groups could be pre-defined, e.g., based on demographic characteristics, or more commonly, learned from the data. Diverse approval is motivated by research finding that, compared to pure engagement, diverse approval tends to correlate with comments that are less toxic, more informative, and higher quality~\citep{ovadya2023,wojcik2022}. Diverse approval has been used to select and rank user-generated content in high-impact applications such as ranking crowd-sourced fact checks on social media platforms~\citep{wojcik2022,TikTok_Newsroom_2025,meta_cn_2025}, and selecting comments in collective response systems like Pol.is and Remesh~\citep{small2021,konya2023_collective_dialogues,huang2024}.

Typically, in a diverse approval metric, users are partitioned into $\gamma$ non-overlapping groups $G_1, \dots , G_\gamma \subseteq [n]$. The number of groups is quite small in practice~\citep{wojcik2022,huang2024}. One simple diverse approval metric, which we refer to as \emph{maximin diverse approval} (MDA), scores each comment by its minimum approval rate across groups:
\begin{align}
    \da(i, \mathcal{A}_n) = \min_{g \in [\gamma]} \frac{1}{|G_g|} |\{u : u \in G_g, i \in A_u \}|\,. \label{eq:da}
\end{align}
Other variants have also been used such as the product of the approval rate across groups~\citep{small2021,huang2024} or a softmax version of minimax diverse approval~\citep{konya2023_collective_dialogues}. Note that these examples satisfy a property that is strictly stronger than approval dependency, that of approval monotonicity where an approval monotonic function is an approval dependent function such that for all approval profiles ${A}'_n$, and $\mathcal{A}_n$ with $A'_u \succeq_{(i)} A_u$ for some $u$ and $A'_v=A_v$ for all $v\in [n]\setminus \{u\},$ then $\score(i, \mathcal{A}'_n) \geq \score(i, \mathcal{A}_n).$

\subsection{Justified Representation} \label{sec:jr}
The problem of selecting $k$ items from a set $[m]$ based on individual's approval ballots is well studied in the context of multi-winner social choice, where the $k$ ``items'' are candidates to be selected for a committee.\footnote{An intuitive approach would be to select the $k$ items with the largest approval score $|\{u\in[n] \mid i\in A_u \}|$. However, this method tends to favor items supported by the majority, potentially excluding minority groups from representation. Let a world with $60$ people who approve the $10$ items in set $A$ and another $40$ people who approve of a distinct set $A'$ of $10$ items. If a committee of size $10$ is selected based on approval scores, the winning committee, $A $ would fail to represent $40\%$ of the world. Instead, a committee composed of $6$ items from $A$ and $4$ items from $A'$ would respect an intuitive notion of proportionality.}  In order to ensure fairness, in the sense that all large groups with cohesive preferences receive some amount of representation, \cite{aziz2017justified} axiomatically formalized the idea of \textit{justified representation}. Following \cite{elkind2022price}'s exposition, and \cite{bredereck2019experimental}'s notion of justifying sets, assume we are given an instance $\mathcal{I}_{m, \mathcal{A}_n, k}.$ of $n$ approval sets over $m$ comments, from which the platform must select $k$ comments.
    \begin{definition}[Cohesiveness]
        A group of users $G\subseteq [n]$ is said to be \textit{cohesive} if all the users in $G$ approve at least one common item: $\cap_{u\in G}A_u \neq \emptyset$.
    \end{definition}

    \begin{definition}[Representativeness]
        A set of items $S \subseteq [m]$ is further said to \textit{represent} a group $G$ of users if at least one user from $G$ approves of at least one item in $S:$ $\exists u\in G$ such that $A_u \cap S \neq \emptyset$.
    \end{definition}

    \begin{definition}[$n/k-$justifying set]
        A set of items $S \subseteq [m]$ is a $n/k-$justifying set if it represents every cohesive group of at least $n/k$ users.
    \end{definition}
The concept of Justified Representation (JR) ensures that large enough cohesive groups are minimally represented.

\begin{definition}[Justified Representation]
        An $n/k$-justifying set $S \in [m]$ is said to satisfy justified representation over $\mathcal{I}_{m, \mathcal{A}_n, k}$ if $|S|=k.$
        \label{def:JR}
    \end{definition}

In the comment ranking setting, a cohesive group is a set of users that approves of at least one comment in common. If a cohesive group contains at least $n/k$ users, then by the principle of proportionality, since we are selecting $k$ comments to highlight, this group is considered deserving representation. A cohesive group $G \subset [n]$ is said to be represented by a set of comments $S \subset [m]$ if at least one user in the group $G$ approves of a comment in $S$.

\subsection{The Price of Justified Representation}

Justified representation is not guaranteed to be compatible with the objectives of general scoring rules.\footnote{We show how JR conflicts with diverse approval in Fig. \ref{fig:ex1}.}  We formally analyze the question: what price must be paid in the total score if we require that selected items satisfy JR? We define the optimal JR set $S^*_{JR}(\mathcal{I}_{m, \mathcal{A}_n, k}, f)$ as the set that maximizes the score among all sets that satisfy justified representation\footnote{Our results hold irrespective of random tie-breaking.}:
    \begin{align}
    \begin{split}
    S^*_{JR}(\mathcal{I}_{m, \mathcal{A}_n, k}, f) = & \argmax_{S \subseteq [m], |S| = k} \score(S) \; \\
    & \text{s.t.} \; S \text{ satisfies JR.} \label{eq:price-of-jr}
    \end{split}
    \end{align}

    Next, we define the price of JR as the ratio between the maximum (unconstrained) score and the maximum JR score, both for a specific instance, and as the maximum over all instances for a given $k$.
    \begin{definition}[The Price of Justified Representation]
        We define the price of justified representation on an instance (Eq.~\ref{eq:price-JR-instance}) and over all instances with set of size $k$ (Eq.~\ref{eq:price-JR}) as 
        \begin{align}
        & P(\mathcal{I}_{m, \mathcal{A}_n, k}, f) = \frac{\score(S^*(\mathcal{I}_{m, \mathcal{A}_n, k}, f))}{\score(S^*_{JR}(\mathcal{I}_{m, \mathcal{A}_n, k}, f))}\, , \label{eq:price-JR-instance} \\
        & P(k, f) = \max_{\mathcal{I}_{m, \mathcal{A}_n, k}}P(\mathcal{I}_{m, \mathcal{A}_n, k}, f)\,. \label{eq:price-JR}
        \end{align}
        \label{def:price-JR}
    \end{definition}
  This formulation follows that of \citet{elkind2022price} who analyze the price of JR $P(k, \engscore)$ for specifically the engagement scoring function $\engscore$. In later sections, we will also analyze a probabilistic version of the price of JR in Sec.~\ref{sec:mallow} and prove high probability bounds over the distribution of instances in addition to our worst-case analysis of the price of JR.

    Finally, when either the instance and scoring function is clear from context, we short-hand $S^*\left(\mathcal{I}_{m, \mathcal{A}_n, k}, f\right)$ and $S_{JR}^*\left(\mathcal{I}_{m, \mathcal{A}_n, k}, f\right)$ by $S^*$ and $S^*_{JR}$, respectively. Similarly, when the scoring function $f$ is clear from context, we short-hand $P(\mathcal{I}_{m, \mathcal{A}_n, k}, f)$ and $P(k, f)$ as $P(\mathcal{I}_{m, \mathcal{A}_n, k})$ and $P(k)$, respectively.

\section{The General Price of JR} \label{sec:general-price-of-jr}
We first study the price of JR without imposing any constraints on the approval profiles of users. We analyze the price of JR for general scoring functions and for approval-dependent scoring functions. Unfortunately, we find that for either class of functions, the price of JR can be quite high. All proofs are deferred to \Cref{app:proofs}.

First, we establish that for general scoring functions, which includes functions that are independent of the approval profile such as content classifiers, the price of JR need not be bounded.
\begin{proposition} \label{prop:unbounded}
    There exists a function $f$ such that $P(k, f)$ is unbounded.
    \label{prop:gen}
\end{proposition}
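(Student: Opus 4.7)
The plan is to exhibit a single approval-independent scoring function $f$ together with a sequence of instances on which the JR constraint becomes tight while the unconstrained optimum grows without bound. Concretely, I would take $f$ to depend only on the comment index, for instance $f(i) = i$, so that $f$ is unbounded above on $\mathbb{N}$.

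For a parameter $t \in \mathbb{N}$, I would construct an instance $\mathcal{I}_t$ with $n = k$ users and $m = k+t$ comments, and set $A_u = \{u\}$ for each $u \in [k]$. The first step is to pin down the JR-feasible sets: each singleton $\{u\}$ is a cohesive group of size $n/k = 1$ whose only commonly-approved comment is $u$, so any $n/k$-justifying set must contain comment $u$ for every $u \in [k]$. Since such a set has size exactly $k$, this forces $S^*_{JR}(\mathcal{I}_t, f) = \{1, \ldots, k\}$ with score $k(k+1)/2$, which does not depend on $t$.

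The second step is to observe that the unconstrained optimum simply selects the top-$k$ scoring comments, namely $\{t+1, \ldots, t+k\}$, giving $f(S^*(\mathcal{I}_t, f)) \geq kt$. Taking the ratio yields $P(\mathcal{I}_t, f) \geq \frac{2t}{k+1}$, which goes to infinity with $t$. Hence $P(k, f) = \sup_{\mathcal{I}} P(\mathcal{I}, f) = \infty$, i.e., $P(k, f)$ is unbounded.

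There is no real obstacle in this construction: the only point to check carefully is that JR genuinely pins $S^*_{JR}$ to $\{1, \ldots, k\}$, which follows because each cohesive group has a unique approved comment and can only be represented by that comment. Conceptually, an approval-independent score can reward comments with no approval support whatsoever, while JR mandates representing every sufficiently large cohesive group, and these two forces can be made arbitrarily incompatible. This is also what motivates the focus of the subsequent sections on approval-dependent scoring rules, where one can hope for nontrivial bounds on the price.
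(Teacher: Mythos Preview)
Your proof is correct and rests on the same mechanism as the paper's: build an instance where $k$ pairwise-disjoint cohesive groups of size $n/k$ each approve a single distinct item, so that JR pins the winning set to exactly those $k$ items, and then arrange the scores so the unconstrained optimum is arbitrarily larger. The paper carries this out with general $n$ and $k$ groups of size $n/k$, assigns score $\varepsilon$ to the forced items and a constant $c$ to at least one remaining item, and sends $\varepsilon \to 0$; you instead specialize to $n=k$ with singleton approval sets, fix the single rule $f(i)=i$, and let $m = k+t$ grow so that the numerator diverges while the denominator stays at $k(k+1)/2$.

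The two are dual ways to blow up the same ratio, but your version has a small structural advantage: it exhibits \emph{one} fixed scoring function and a family of instances on which the price is unbounded, which matches the quantifier order in the proposition (``there exists $f$ such that $P(k,f)$ is unbounded'') on the nose. The paper's $\varepsilon$-argument, read literally, varies $f$ along with $\varepsilon$ and so more directly shows $\sup_f P(k,f)=\infty$; of course it is easily repaired by taking $\varepsilon = 0$, but your formulation sidesteps the issue entirely.
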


The result is not so surprising as, in general, if the scoring function can be independent of the approval profile (as with a classifier $\classifier$), we need not expect that it gives representation to cohesive groups which are defined by their common approval. This exemplifies the potential negative externalities of using content classifiers that do not account for the users' own approval of different comments.

We next turn to analyzing the price of JR for approval dependent rules (Def.~\ref{def:a-m}) which require that if an item has a positive score that someone must have approved of it. We show that the price of JR is $k$ for such rules. In fact, the result is true even if we restrict ourselves to bridging scoring rules used in practice like maximin diverse approval.

\begin{theorem} \label{thm:approval-dependent-jr}
Assume $\gamma>1$ and there exists an item with positive score. For any scoring function $f$ that is approval-dependent, $P(k, f)$ is at most $k$ and is equal to $k$ for certain approval-dependent scoring rules such as maximin diverse approval $\da$.
\label{th:gen-a-m}
\end{theorem}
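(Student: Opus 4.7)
The plan is to prove the bound and the tightness separately.

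For the upper bound $P(k,f)\le k$, I will start from an item $i^\star \in \argmax_{i\in[m]} f(i)$. Since all scores are nonnegative and $|S^\star|=k$, we have $f(S^\star)\le k\, f(i^\star)$. The assumption that some item has positive score, combined with approval dependence (Def.~\ref{def:a-m}), implies $f(i^\star)>0$ and that $i^\star$ is approved by at least one user. The crux is then to exhibit a JR-satisfying set $S'$ with $i^\star\in S'$: since $f\ge 0$ is additive, this gives $f(S^\star_{JR})\ge f(S')\ge f(i^\star)\ge f(S^\star)/k$, i.e., $P(\mathcal{I}_{m,\mathcal{A}_n,k},f)\le k$, and taking the maximum over instances yields the claim.

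To produce such $S'$, I will run a forced-start variant of greedy approval / Chamberlin--Courant: initialize $S=\{i^\star\}$ and iteratively add the item approved by the most users not yet covered by $S$, for $k-1$ more rounds. The JR certification copies the standard Aziz--Brill--Conitzer--Elkind--Freeman--Walsh style argument, adapted to the fact that the first pick is prescribed. Assume for contradiction that some cohesive $G$ with $|G|\ge n/k$ is unrepresented at termination; then $G$ is uncovered at every step, and at each of the $k-1$ greedy steps, any common item of $G$ covers $\ge |G|\ge n/k$ uncovered users, so the greedy choice covers $\ge |G|$ as well. The first (forced) step covers at least one user because $i^\star$ is approved. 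Summing, the number of covered users is $\ge 1+(k-1)|G|$, so the uncovered set has size $\le n-1-(k-1)|G|$; using $G$ is uncovered, $|G|\le n-1-(k-1)|G|$, which gives $|G|\le (n-1)/k<n/k$, a contradiction.

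For the tightness with $\mda$, I will build an instance with $\gamma=2$ and ratio exactly $k$. Take $G_1=\{u^\star\}\sqcup H_1\sqcup\cdots\sqcup H_{k-1}$ with each $|H_j|=h=2$, and $G_2=\{v\}$, so $n=2k$ and $n/k=2$. Define items $c_1,\dots,c_k$ approved by exactly $u^\star$ and $v$, and items $d_1,\dots,d_{k-1}$ where every user in $H_j$ approves only $d_j$. Then $\mda(c_j)=\min(1/|G_1|,1)=1/(2k-1)$ and $\mda(d_j)=\min(h/|G_1|,0)=0$, so $S^\star=\{c_1,\dots,c_k\}$ with $\mda(S^\star)=k/(2k-1)$. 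The $H_j$ are cohesive of size $2\ge n/k$, and their members approve only $d_j$, so every JR set must contain all $d_j$; the remaining slot admits at most one $c_j$, giving $\mda(S^\star_{JR})=1/(2k-1)$. A brief case check confirms that the only other cohesive group of size $\ge 2$ is $\{u^\star,v\}$, which is represented by any $c_j$. Hence $P(\mathcal{I},\mda)=k$, and combined with the upper bound, $P(k,\mda)=k$.

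The main obstacle I anticipate is the forced-start greedy argument in the upper bound: one must be careful that prescribing $i^\star$ at step 1 does not break the counting that underlies the standard JR guarantee, and in particular that the accounting still forces a strict inequality $|G|<n/k$ rather than $\le n/k$. The lower-bound construction itself is combinatorially straightforward once one sees that the tension in $\mda$---large scores require broad cross-group approval, which tends to represent cohesive groups automatically---is resolved by putting the cohesive groups entirely inside $G_1$ (making their common items have $\mda=0$) and letting a single $G_2$-user $v$ carry all the cross-group approval of the glossy items.
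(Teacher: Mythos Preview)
Your proof is correct. The high-level skeleton of the upper bound matches the paper exactly: pick $i^\star$ of maximal score, use $f(S^\star)\le k\, f(i^\star)$, then argue that some JR set contains $i^\star$ so that $f(S^\star_{JR})\ge f(i^\star)$. Where you diverge is in the sub-argument for ``some JR set contains $i^\star$.'' The paper invokes a structural lemma (attributed to Bredereck et al.): either some $n/k$-justifying set has size $<k$, in which case one simply completes it with $i^\star$; or every JR set has size exactly $k$, which forces a rigid partition into $k$ blocks of size $n/k$ each owning a unique item, and one swaps $i^\star$ for the item owned by the block containing an approver of $i^\star$. You instead run a forced-start greedy cover from $i^\star$ and certify JR by the standard disjoint-coverage counting; the $+1$ contributed by the approver of $i^\star$ at step~1 is exactly what tightens the final inequality to $|G|\le (n-1)/k < n/k$. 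Your route is more self-contained and constructive; the paper's lemma is a reusable black box but leans on a non-obvious case split.

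On tightness, the paper's written proof actually stops after the upper bound and does not spell out a $\da$ instance achieving ratio $k$. Your construction---$G_2=\{v\}$, $G_1=\{u^\star\}\cup H_1\cup\cdots\cup H_{k-1}$ with $|H_j|=2$, items $c_1,\dots,c_k$ approved only by $\{u^\star,v\}$ and items $d_j$ approved only by $H_j$---gives $\da(S^\star)=k/(2k-1)$ while every JR set must contain all $d_j$, leaving one slot and $\da(S^\star_{JR})=1/(2k-1)$. This cleanly completes the claim for $k\ge 2$ (and $k=1$ is trivial).
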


For the comment ranking setting, a price of $k$ may still be high. For example, suppose we use maximin diverse approval to score comments and select $10$ comments to highlight. Our results imply that, in the worst-case, enforcing representation can mean that the optimal JR set $S^*_{JR}$ attains $1/10$-th the diverse approval that the optimal unconstrained set $S^*$ does—an undesirable outcome if diverse approval is a priority. In the next section, we show that adding a natural assumption on the clustering of user preferences yields much lower bounds on the price of JR for all scoring rules.

\section{The Price of JR in a Clustered Setting} \label{sec:polarized}
Given the potential high price of JR when users' approval profiles can be arbitrary, we now turn to asking whether there are natural restrictions where the price of JR is low. We analyze one such setting---the setting in which users' preferences cluster into a few groups. This scenario is not only common in practice but also is the setting where bridging interventions, which aim to bridge over a few polarized groups, are most relevant. In particular, we assume that users can be partitioned into $\gamma$ groups $G_1, \dots, G_\gamma \subseteq [m]$ on the basis of their approval profiles. We refer to these groups as \emph{divided groups} to contrast them from the cohesive groups that are relevant for justified representation. 

We show that when $\gamma$ is small and the groups display high within-group homogeneity, the price of JR is low. We operationalize within-group homogeneity in two distinct ways, showing that the price of JR remains consistently low under both interpretations. These results are notable, because, in practice, when bridging interventions are implemented, (a) $\gamma$ \emph{is} small and (b) divided groups are often determined by  clustering or matrix factorization of the user approval matrix, resulting in groups that are relatively distinct and  homogeneous~\citep{wojcik2022,Small_Bjorkegren_Erkkilä_Shaw_Megill_2021}.

\subsection{Homogeneity as Cohesiveness}
We first illustrate the price of JR in the setting where within-group homogeneity is operationalized as cohesiveness in the JR sense. That is, users are partitioned into $\gamma$ non-overlapping groups $G_1, \dots, G_\gamma \subset [m]$ and each of these groups is cohesive.
\label{sec::homogeneity_coh}
\begin{theorem} \label{thm:homo-as-cohesive}
        Assume that $k>1$ and users are partitioned into $\gamma$ cohesive groups $G_1, \dots , G_\gamma$ where $1 \le \gamma < k$. For any scoring rule $f$, the price of justified representation $P(k, f)$ is at most $\frac{k}{k - \gamma}$, and is exactly that for certain scoring rules such as maximin diverse approval $\da$.
\label{th:pol}
\end{theorem}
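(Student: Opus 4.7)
The plan is to prove both directions separately. For the upper bound $P(k, \score) \le k/(k-\gamma)$, I would patch the unconstrained optimum $S^*$ into a JR-satisfying set by trading its $\gamma$ lowest-scoring items for one cohesiveness witness per partitioning group. For tightness under $\da$, I would exhibit a clustered instance in which JR forces the committee to spend exactly $\gamma$ zero-score slots, leaving only $k - \gamma$ slots for positive-score items.

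For the upper bound, let $c_j \in \bigcap_{u \in G_j} A_u$ (which exists by cohesiveness of $G_j$) and let $T$ consist of the top $k - \gamma$ items of $S^*$ ranked by score. Set $S = T \cup \{c_1, \ldots, c_\gamma\}$, padded with arbitrary additional items to exactly $k$ elements if the $c_j$'s collide with each other or with $T$. Then $S$ satisfies JR: any cohesive group $H$ with $|H| \ge n/k$ lies in $[n] = \bigcup_j G_j$ and therefore intersects some $G_j$; any $u \in H \cap G_j$ approves $c_j \in S$, so $H$ is represented. For the score bound, the top $k - \gamma$ items of any size-$k$ set have average score at least the overall average, giving $\score(T) \ge \tfrac{k-\gamma}{k}\score(S^*)$; since $\score \ge 0$, padding only helps.

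For tightness under $\da$ (with $\gamma \ge 2$), I would take $\gamma$ equal divided-and-cohesive groups $G_1, \ldots, G_\gamma$ of size $n/\gamma$, each containing a minority subgroup $H_j \subseteq G_j$ of size exactly $n/k$. The comment pool consists of: items $b_j$ approved by all of $G_j$ (MDA $=0$, unsupported outside $G_j$), items $h_j$ approved only by $H_j$ (MDA $=0$), and items $a_1, \ldots, a_k$ approved by every user not in any $H_j$ (MDA $= 1 - \gamma/k$ per item). The unconstrained optimum $S^* = \{a_1, \ldots, a_k\}$ scores $k - \gamma$ but fails to represent any $H_j$, whose members approve no $a_i$. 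Any JR-satisfying set must include at least one item from $\{b_j, h_j\}$ for each $j$, giving $\score(S^*_{JR}) = (k-\gamma)(1 - \gamma/k) = (k-\gamma)^2/k$ and thus price exactly $k/(k-\gamma)$.

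The main obstacle is the case analysis confirming that the $H_j$'s are the \emph{only} cohesive groups of size $\ge n/k$ forcing zero-score items. Since $b_j$ and $h_j$ are approved only within $G_j$ and $H_j$ respectively, no cohesive subgroup can span two distinct $H_j$'s; any cohesive subgroup containing a user $u$ outside $\bigcup_j H_j$ must form around a common item in $A_u = \{b_j\} \cup \{a_1,\ldots,a_k\}$ for the appropriate $j$, and in either subcase the subgroup contains a user who approves some $a_i$, so it is already represented by $S^*$. This leaves the $H_j$'s as the sole obstructions, and each demands exactly one of its two MDA-zero options.
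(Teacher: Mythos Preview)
Your proposal is correct and follows essentially the same approach as the paper: the upper bound is proved identically (patch $S^*$ by inserting one cohesiveness witness per group and keep the top $k-\gamma$ scorers), and your tightness construction has the same structure as the paper's (a size-$n/k$ subgroup inside each $G_j$ approves only group-specific zero-MDA items, forcing $\gamma$ wasted slots), differing only in parameters---the paper fixes group size to $n/k+1$ with one ``special'' member approving the shared items, while you allow arbitrary $n/\gamma$ with a minority $H_j$ abstaining. One small simplification: your final paragraph's case analysis is more than you need, since every user approves their own $b_j$, so the candidate $\{b_1,\dots,b_\gamma,a_1,\dots,a_{k-\gamma}\}$ trivially represents every user and hence satisfies JR.
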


Assuming there are only a few divided groups, i.e. $\gamma$ is small, the price of JR of $k / (k-\gamma)$ is close to $1$ (the lowest possible value).

\subsection{Homogeneity as Low Dispersion}\label{sec:mallow}
Operationalizing homogeneity as cohesiveness was illustrative, but in real-world scenarios, it may be unlikely for all the members in a divided group to approve on one comment in common. For a more realistic model, we now operationalize within-group homogeneity in the context of a common statistical model of preferences. In particular, we assume that users' preferences over items are drawn from a Mallows mixture model~\citep{awasthi2014,liu2018}, where each divided group has its own component in the mixture. We then operationalize homogeneity by the level of dispersion $\phi \in [0, 1]$ of individual preferences from their group-specific central ranking. In this setting, we find that with high probability, the price of JR is at most $O\left (k / \left( k - \gamma \left \lceil (\log k)/\log(1/\phi) \right \rceil \right) \right  )$, which approaches our previous bound of $O(k / (k-\gamma))$ as within-group homogeneity increases, i.e., as $\phi \rightarrow 0$.
Let us begin by defining the Mallows model.
\begin{definition}[Mallows model]
The Mallows model $M(\phi, \sigma)$ is described by a central reference ranking $\sigma$ and a dispersion parameter $\phi\in [0, 1]$. The probability of observing a preference ranking $\pi \sim M(\phi, \sigma)$ is given by
\begin{align}
\Pr(\pi ; \phi, \sigma) = \phi^{d(\pi, \sigma)}/Z({\phi}) \,,
\end{align}
where $d$ is the Kendall-tau distance, which measures the number of pairs of items $(x, y)$ that two rankings disagree on (i.e. $x$ preferred to $y$ in one ranking and $y$ to $x$ in the other ranking). Finally $Z({\phi}) = 1 \cdot (1 + \phi) \cdot (1 + \phi + \phi^2) \cdots (1 + \cdots + \phi^{m-1})$ is the normalization constant.
\end{definition}
 Note that when the dispersion parameter $\phi$ is equal to zero, then the model generates the reference ranking with probability one. At the other extreme, when the dispersion parameter $\phi$ is equal to one, then the model becomes a uniform distribution over rankings. Thus, low values of $\phi$ indicate greater homogeneity.

We model preferences as being drawn from a Mallows mixture model where each divided group has its own component.
\begin{definition}[Mallows mixture model]
    The Mallows mixture model $\mathcal{M}_\gamma(\bm{\phi}, \bm{\sigma}, \lambda)$ combines $\gamma$ Mallows models. Let $\bm{\phi} = (\phi_1, \dots, \phi_\gamma)$ be a collection of dispersion parameters, $\bm{\sigma} = (\sigma_1, \dots, \sigma_\gamma)$ be a collection of reference rankings, and $\bm{\lambda} = (\lambda_1, \dots, \lambda_\gamma)$ be the mixing proportions where for all $i\le \gamma$, we have $\lambda_i \ge 0$ and $\lambda_1 + \dots + \lambda_\gamma = 1$.  The mixture model is defined by the probability mass function
\begin{align}
\Pr(\pi; \bm{\phi}, \bm{\sigma}, \bm{\lambda}) = \sum_{i=1}^\gamma \lambda_i \cdot \Pr(\pi; \phi_i, \sigma_i) \,.
\end{align}
\end{definition}
A sample from the Mallows model represents a user $ u$'s ranked preference $\pi_u$ over all items. For the approval voting setting, we assume that user approves of the top $\tau$ items in their ranking $\pi$, i.e., $A_{u} = \pi_u^{-1}([\tau])$.
We can obtain the following analytical upper bound on the price of JR, which is independent of any particular algorithmic strategy.

\begin{theorem} \label{thm:mallows}
    Suppose that each user $u$'s ranking $\pi_u$ is drawn from the Mallows mixture model $\mathcal{M}_\gamma(\bm{\phi}, \bm{\sigma}, \bm{\lambda})$ and their approval set $A_u$ is equal to the top $\tau$ items in their ranking, denoted by $\pi_u^{-1}([\tau])$. Let $\phi_{\max} = \max_{i} \phi_i$ be the maximum dispersion parameter for any group (component of the mixture). Then, for any scoring function $f$, and for all $\phi_{\max} \in [0,1]$, with probability $1 - \delta$ over the i.i.d. draws of user rankings $\pi_u$, $u\in[n]$, that maps to the approval profile $\mathcal{A}_n$, we have
    \begin{align}
        P(\mathcal{I}_{m, \mathcal{A}_n, k}, f)\leq 
        \frac{k}{ \max\left(0, k-q \right)} \; ,
    \end{align}
    where
    \begin{align*}
      q = \gamma  \left\lceil \left(\log\frac{k}{\delta}\right) \bigg/
      \begin{dcases}
            \log{\frac{1-\phi^m_{\max}}{\phi_{\max}^\tau (1-\phi_{\max}^{m-\tau})}} &\textrm{if } \phi < 1\\
            \log{m} - \log(m-\tau) &\textrm{if } \phi=1
        \end{dcases} \right\rceil .
    \end{align*}
\label{th:pol2}
\end{theorem}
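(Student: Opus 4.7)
The plan is to reduce to the ``cohesive-cover'' argument behind Theorem~\ref{thm:homo-as-cohesive} by constructing, with probability at least $1-\delta$, an explicit set $S$ of at most $q$ items that represents every cohesive group of size $\geq n/k$. Given such an $S$, augmenting it with the $k-q$ highest-scoring items produces a valid JR set whose score is at least the sum of the top $k-q$ scores, and the worst case (all top-$k$ scores equal) yields $P(\mathcal{I}_{m,\mathcal{A}_n,k},f)\leq k/\max(0,k-q)$ exactly as in the proof of Theorem~\ref{thm:homo-as-cohesive}.

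Set $J = \lceil \log(k/\delta)/\log L\rceil$ (with $L$ as in the theorem statement, interpreted by continuity at $\phi=1$) and $S = \bigcup_{g=1}^{\gamma} \{\sigma_g(1),\ldots,\sigma_g(J)\}$, so $|S|\leq \gamma J = q$. Let $U^{\mathrm{unrep}} = \{u : A_u \cap S = \emptyset\}$. Any cohesive group of size $\geq n/k$ that is unrepresented by $S$ must lie entirely inside $U^{\mathrm{unrep}}$, so it suffices to show $|U^{\mathrm{unrep}}| < n/k$ with probability $\geq 1-\delta$. Since the $\pi_u$ are drawn i.i.d., by linearity of expectation and Markov's inequality this reduces to bounding, for each component $g$,
\[
\beta_J^{(g)} := \Pr_{\pi\sim M(\phi_g,\sigma_g)}\!\bigl(\{\sigma_g(1),\ldots,\sigma_g(J)\}\cap \pi^{-1}([\tau]) = \emptyset\bigr) \;\leq\; L^{-J}.
\]

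I would prove this tail bound using two classical Mallows properties. First, the position of the top element of the central ranking is truncated geometric, giving
\[
\beta_1^{(m,\phi)} := \Pr\bigl(p_{\sigma(1)}(\pi) > \tau\bigr) = \frac{\phi^\tau(1-\phi^{m-\tau})}{1-\phi^m},
\]
which extends by continuity to $(m-\tau)/m$ at $\phi=1$. Second, for $\pi\sim M(\phi,\sigma)$ the position of $\sigma(1)$ is independent of the induced ranking $\pi'$ of $\{\sigma(2),\ldots,\sigma(m)\}$, and by marginal consistency $\pi'\sim M(\phi,\sigma|_{\{\sigma(2),\ldots,\sigma(m)\}})$. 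On the event $\{p_{\sigma(1)}>\tau\}$ the top $\tau$ of $\pi$ coincides with the top $\tau$ of $\pi'$, and combining with independence yields the recursion $\beta_J^{(m,\phi)} = \beta_1^{(m,\phi)}\cdot \beta_{J-1}^{(m-1,\phi)}$. A short algebraic check shows that $\beta_1^{(m,\phi)}$ is non-decreasing in both $m$ and $\phi$, so $\beta_1^{(m-1,\phi_g)} \leq \beta_1^{(m,\phi_g)} \leq \beta_1^{(m,\phi_{\max})} = 1/L$, and induction on $J$ gives $\beta_J^{(g)} \leq L^{-J}$. Plugging into Markov, $\Pr(|U^{\mathrm{unrep}}|\geq n/k) \leq k\,\mathbb{E}|U^{\mathrm{unrep}}|/n \leq kL^{-J} \leq \delta$ by the choice of $J$, which completes the reduction.

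The main obstacle is the joint bound $\beta_J \leq \beta_1^J$. A naive negative-correlation argument only gives $\beta_J \leq \prod_{j=1}^J \Pr(\sigma(j)\notin \text{top }\tau)$, but each marginal factor in this product is strictly \emph{larger} than $\beta_1$ since lower-ranked items are more likely to fall outside the top $\tau$, so this route is too weak. The position/relative-ranking independence property sidesteps the issue by peeling off the current top item at each step, reducing to a strictly smaller Mallows instance whose top-item exclusion probability is provably at most $\beta_1$; the induction then closes.
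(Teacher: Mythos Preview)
Your proposal is correct and tracks the paper's proof closely: both construct the candidate justifying set from the top $J$ items of each reference ranking, bound the number of unrepresented users via Markov's inequality, and then invoke the combinatorial step from Theorem~\ref{thm:homo-as-cohesive}. The only difference is in how the tail bound $\beta_J\le L^{-J}$ is established. The paper's lemma works directly with the bottom-up RIM construction, arguing that the event ``all of $\sigma(1),\dots,\sigma(J)$ land below position $\tau$'' coincides with the event that each of these items is \emph{inserted} below position $\tau$, which by independence of the insertion steps gives the exact product $\prod_{i=1}^{J}\beta_1^{(m-i+1,\phi)}$. You obtain the identical product via the recursion $\beta_J^{(m)}=\beta_1^{(m)}\beta_{J-1}^{(m-1)}$, using that the final position of $\sigma(1)$ is independent of the induced ranking on the remaining items together with marginal consistency. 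Both arguments then bound each factor by the $m'=m$ term via monotonicity of $\beta_1^{(m',\phi)}$ in $m'$. Your presentation is slightly more modular in that it isolates the independence and marginal-consistency properties as reusable Mallows facts, but the two are equivalent---the independence you invoke is precisely what the bottom-up RIM makes explicit.
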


Observe that as $\phi_{\max} \rightarrow 0$ and the groups become increasingly homogenous, we recover the previously established bound of $k/(k - \gamma)$ in the setting where divided groups are cohesive (Thm.~\ref{thm:homo-as-cohesive}).\footnote{Note the ceiling in the bound.}

\begin{figure*}
    \includegraphics[width=\textwidth]{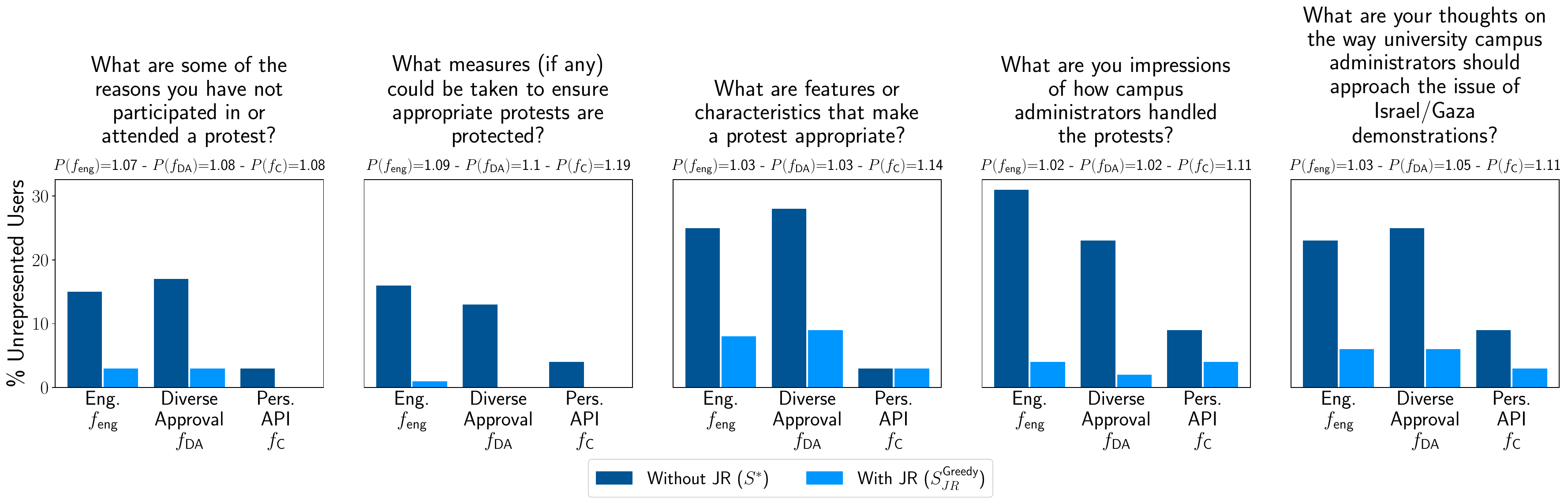}
    \caption{\textbf{Representation when selecting $k=8$ Remesh comments, with and without a JR constraint.} An individual who approves of at least one comment in a set $S$ is said to be \textit{represented}. The JR constraint significantly increases the number of individuals who are represented. The prices of JR are reported with the shorthand $P(f)$ for $P^{\text{Greedy}}(\mathcal{I}_{m, \mathcal{A}_n, 8},f).$ Due to space constraints, the figure only shows the five questions for which $S^*$ did not satisfy JR under diverse approval. Fig. \ref{fig:remesh-coverage-2} shows results for the remaining five questions.}
    \label{fig:remesh-coverage}
\end{figure*}

\section{Experiments}\label{sec:exp}
We empirically investigate the impact of ensuring JR through both real-world experiments and simulations.\footnote{Our code builds upon the \href{https://github.com/martinlackner/abcvoting}{\texttt{abcvoting}} Python package for approval-based multi-winner voting rules~\citep{joss-abcvoting}.} In the real-world setting we explore ranking comments on a collective response system \citep{konya2023a}, where users shared their thoughts on campus protests and the right to assemble. The simulations investigate the implications of our theoretical findings based on the Mallows mixture model and can be found in \Cref{app: mallows-exp}.

\textbf{The GreedyCC algorithm and its price.} Solving for the optimal JR set $S^{*}_{JR}$ is NP-hard in general~\citep{bredereck2019experimental,elkind2022price}. For small instances, we may use an ILP formulation, but for large instances (e.g. on social media platforms), we would expect practitioners to resort to approximation algorithms. To understand the effects we might expect in practice, we too employ an approximation algorithm in all our experiments, along with an associated approximate price of JR. Specifically, we use the GreedyCC algorithm~\citep{elkind2022price,elkind2023justifyingroups} to find a set $\greedyset$ that satisfies JR and approximately maximizes the score function $f$. In brief, our implementation of GreedyCC greedily adds items to the selected set until a $n/k$-justifying set is found; the remaining items are then selected to maximize the score function $f$. A full description of the algorithm is given in Alg.~\ref{algo:greedy-cc}.

Given an instance $\mathcal{I}_{m, \mathcal{A}_n, k}$, we also define an associated price of GreedyCC based on comparing the score of the set $\greedyset$ returned by GreedyCC and the optimal set $S^*$ (which is not constrained to satisfy JR):
\begin{align}
& P^{\text{Greedy}}(\mathcal{I}_{m, \mathcal{A}_n, k}, f) = \frac{\score(S^*(\mathcal{I}_{m, \mathcal{A}_n, k}, \score))}{\score(\greedyset(\mathcal{I}_{m, \mathcal{A}_n, k}, \score))} \,.
\end{align}

\textbf{Remesh dataset.} We now investigate the impact of enforcing JR in a real-world setting: ranking comments about campus protests and the right to assemble.\footnote{The data are available at \url{https://github.com/akonya/polarized-issues-data}.} These comments were generated as part of two sessions on Remesh, a popularly-used \emph{collective response system} (CRS) \citep{ovadya2023generativecicollectiveresponse}, each with about 300 participants (see \Cref{app:
remesh-data-stats} for summary statistics about the dataset). Collective response systems are used by governments, companies, and non-profits to elicit the opinions of a target population at scale, in a more participatory way than traditional polling. In particular, on a CRS, participants give their opinion to different questions via free-form responses, and then vote on whether they agree with other participants' comments.\footnote{Individuals only give a feedback only on a small set of comments, however, we use inferences of the full approval matrix. In particular, we take the probabilistic agreement inferences~\citep{konya2022elicitation} conducted by Remesh (in which each user and comment are given a probability of that user agreeing with that comment), and threshold these inferences by $0.5$ to get a binary approval matrix.}

\textbf{Experimental setup.} We examine the impact of ranking the comments using the following three scoring functions, with and without a JR constraint. 
\begin{enumerate}
    \item Engagement $\engscore$ (Eq. \ref{eq:eng}).
    \item Maximin diverse approval $\da$ where participants' self-reported political ideologies were used to form the three groups (left-, center-, and right-leaning users) that were considered in the function $\da$ (see Eq.~\ref{eq:da}).
        \item A Perspective API classifier $\classifier$ that scores how ``bridging'' a comment is based upon its text~\citep{saltz2024}. The score of each comment is its average score for the seven ``bridging'' attributes in Perspective API: affinity, compassion, curiosity, nuance, personal story, reasoning, and respect.\footnote{See \url{https://developers.perspectiveapi.com/s/about-the-api-attributes-and-languages} for details on each attribute. We also report results broken down by individual attributes in \Cref{app:pol-brek}.}
\end{enumerate}
For each question in the Remesh dataset\footnote{We filter the data to remove any empty or duplicate comments.} and each score function $f$, we compare the optimal set $S^*(f)$ of $k=8$ comments (without a JR constraint) and the set $\greedyset(f)$ which optimizes the score function while ensuring JR through the GreedyCC approximation algorithm (Alg.~\ref{algo:greedy-cc}).

\subsection{Results}
\textbf{Enforcing JR leads to more users being represented.}  We say that an individual is represented in the set $S^*$ or $\greedyset$ if there is at least one item in the set that they approve of. For all ten questions and all three scoring functions, enforcing a JR constraint increases overall representation (Fig. \ref{fig:remesh-coverage}). Without enforcing JR, $18\%$, $15\%$, and $4\%$ of users were unrepresented in the engagement $S^*(\engscore)$, diverse approval $S^*(\da)$, or the Perspective API set $S^*(\classifier)$, respectively. After enforcing JR with GreedyCC, these percentages reduced to $5\%$, $4\%$, and $2\%$, respectively (\Cref{tab:representation}).

\textbf{Ranking with Perspective API always satisfied JR.} Without explicitly enforcing JR, the engagement $S^*(\engscore)$, diverse approval $S^*(\da)$, and the Perspective API set $S^*(\classifier)$ satisfied JR 30\%, 50\%, and 100\% of the time, respectively. This was surprising as the Perspective API does not explicitly aim to satisfy JR. Moreover, both diverse approval and the Perspective API classifiers are meant to``bridge'' across groups~\citep{ovadya2023}. However, Perspective API satisfied JR at a much higher rate and also provided better overall representation (Fig. \ref{fig:remesh-coverage}). This could be because diverse approval focuses on bridging across a few pre-defined groups (in this case, the three political groups), whereas the bridging attributes targeted by the Perspective API might appeal to a broader range of groups.\footnote{All our results rely on Remesh's inferred approval votes and biases in these inference could impact our results. For example, if the inferences are biased towards predicting approval for longer texts (we do not know if this is the case), which are also scored higher by the Perspective API, then the Perspective API might appear to be provide more representation than it actually does.}

\textbf{Enforcing JR improves representation across political groups.} As shown in \cref{app:pol-brek}, enforcing JR with GreedyCC improves representation for all three political groups 100\%, 100\%, and 70\% of the time, respectively (although in the case of Perspective API, the set already satisfied JR without GreedyCC). This is notable as the JR criterion does not rely on any explicit ideology labels of users. 

\textbf{Enforcing JR comes at a low price.} We know that the price of JR $P(k, \engscore)$ for engagement is in $\Theta(\sqrt{k})$ ~\citep{elkind2022price}, for maximin diverse approval $\da$ is equal to $k$ (Thm. \ref{thm:approval-dependent-jr}), and for a general score function like the Perspective API may be unbounded (Thm. \ref{prop:unbounded}). Thus, in the worst case, we would expect that $P(k, \engscore) < P(k, \da) < P(k, \classifier)$. We do indeed find this ordering when evaluating the price of GreedyCC, however, the prices are all also much lower than the worst-case bounds. Letting $\bar{P}^{\text{Greedy}}(f)$ be the price of GreedyCC $\pgreedy(\mathcal{I}_{m, \mathcal{A}_n, 8}, \,f)$ averaged over the ten questions (instances) in the dataset, we found that:
\begin{align*}
& \bar{P}^{\text{Greedy}}(\engscore) = 1.05 \\
 < & ~\bar{P}^{\text{Greedy}}(\da) = 1.06 \\
< & ~\bar{P}^{\text{Greedy}}(\classifier) = 1.18.
\end{align*}
These low prices may be explained by the presence of small $n/k$-justifying sets. For all questions, GreedyCC was able to find an $n/k$-justifying set with only at most $\gamma = 2$ comments (\Cref{tab:justifying-set}), which from \Cref{thm:homo-as-cohesive}, implies that the price of JR can be at most $k/(k-\gamma) = 8/(8-2) = 1.33$.\footnote{The proof of \Cref{thm:homo-as-cohesive} follows from the fact that, if the population of users can be partitioned into $\gamma$ cohesive groups, then there is an $n/k$-justifying set of size $\gamma$.} A small $n/k$-justifying set would also be expected in settings where user opinions cluster into only a few groups (the same setting we analyze in \Cref{sec:polarized}).

\textbf{The JR feeds also satisfied EJR+.} Finally, out of the 48 JR feeds in our experiment, 47 of them also satisfied EJR+~\citep{brill2023}, a strengthened version of JR. Although more experimentation is needed, these findings may indicate that for comment ranking, as in other real-world applications of JR~\citep{boehmer2024approval}, the practical differences between these axioms may be minimal.

\section{Discussion}
The online public sphere presents an opportunity to scale informal public deliberation. However, current tools for prosocial moderation and ranking of online comments may also inadvertently suppress legitimate viewpoints, undermining another key aspect of deliberation: inclusion of diverse viewpoints. In this paper, we introduced a general framework for algorithm ranking that incorporates a representation constraint from social choice theory. We showed, in theory and practice, how enforcing this JR constraint can result in greater representation while being compatible with optimizing for other measures of conversational quality or even user engagement. Our work sets the groundwork for more principled algorithmic interventions that can uphold conversational norms while maintaining representation.

\textbf{Limitations.} Nevertheless, there are still additional questions that should be considered before deploying our framework on a real-world platform.

First, the JR representation constraint depends on an approval matrix specifying which users approve which comments. It is not always clear which users should be included in the approval matrix. For example, should it be all users on the platform or only those who saw the post? The answer to this question may be different for different platforms.

Second, on real-world platforms, users will not ``vote'' on all comments, and thus, their approval of various comments will need to be inferred~\citep{halpern2023representation}. To provide users with genuine representation, it is essential to ensure these inferences are accurate. (Our Remesh experiments in \Cref{sec:exp} also relied on inferences of the approval matrix and should be interpreted with this in mind.) Moreover, on many platforms, users' approval will need to be inferred from engagement data such as upvotes or likes. If the chosen engagement significantly diverges from actual user approval, the validity of the process could be compromised. Investigating the impact of biased approval votes, in a similar vein to the work of \citet{halpern2023representation} or \citet{faliszewski2022robustness}, could be an interesting direction for future work

Finally, it is crucial to test the effects of enforcing JR through A/B tests, as offline results may differ from those observed in real-world deployments. In our Remesh experiments, we found that enforcing JR came at little cost to other conversational quality measures and also user engagement. However, these were ``offline experiments'' involving the re-ranking of historical data, without deploying our new algorithm to users, and further testing is still needed.

\textbf{Future work.} In this work, we focus on analyzing JR (although empirically, we find that all but one of our JR committees also satisfies EJR+). In future work, it would be interesting to analyze stronger axioms such as EJR~\citep{aziz2017justified}, EJR+~\citep{brill2023} or BJR~\citep{fish2024}. Additionally, our work focused on the classical JR setting, where a set of $k$ top comments is selected. However, it could be interesting to explore extensions of JR that apply directly to the ranking setting~\citep{skowron2017,israel2024dynamic}, although it is not immediately clear how to integrate a scoring function into these extensions.

\section*{Acknowledgements}
We thank Hoda Heidari, Seth Lazar, Aviv Ovadya, Ariel Proccacia, Luke Thorburn, and Glen Weyl for their feedback on this project. We thank Andrew Konya and Lina Qiu for sharing the Remesh inferred approval votes with us and making them publicly available for future research.

\section*{Impact Statement} 
Online comment sections have the potential to serve as important platforms for public discourse and deliberation, though their effectiveness is often undermined by the low quality of conversations. Algorithms can potentially significantly improve these spaces by facilitating more constructive public sphere deliberation. This work takes a step in this direction by directly incorporating ideals of representation, as formalized in social choice theory, into the comment ranking process. We demonstrate that this approach can maintain conversational quality while ensuring diverse viewpoints are represented. Nonetheless, as discussed in our limitations section, before deploying this on a real-world platform, more testing is necessary to ensure that implementations of this approach genuinely provide users with representation.

\bibliography{refs}
\bibliographystyle{icml2025}

\newpage
\appendix
\onecolumn

\counterwithin{table}{section}
\counterwithin{algorithm}{section}
\counterwithin{figure}{section}
\section{How does JR compare to demographic-based representation?} \label{sec:jr-fairness}

It is worth discussing how our JR-based approach differs from other, perhaps more familiar, notions of demographic or social representation from the algorithmic fairness literature~\citep{barocas-hardt-narayanan,chasalow2021}. JR is a bottom-up measure where the groups that are represented depend entirely on the users' own approval of different comments. This has several advantages for the comment ranking setting.

First, is the feasibility for real-world applications. In algorithmic fairness, groups are typically pre-defined based on a demographic, such as race or religion. In contrast to JR, the requirement of (inference of) user demographic labels make applying many algorithmic fairness methods infeasible or difficult to implement in industry settings, including on social media platforms, because of conflicts with privacy and legal constraints~\citep{holstein2019,veale2017}. 

Second, JR has the flexibility to automatically represent different groups for different settings. For example, the groups that are relevant to represent when ranking comments on an article about the 2025 New York City budget planning process are different from the groups that are relevant to represent on a article about a basketball game between the Celtics and Knicks. On a news site or social media platform where the relevent groups to represent for each post can be vastly different, it is difficult to scale up algorithmic fairness approaches that require pre-specifying the set of dimensions or demographics to consider.

Third, JR may accommodate intersectionality~\citep{crenshaw1991} in a more natural way than many algorithmic fairness approaches. Algorithmic fairness approaches to intersectionality attempt to provide guarantees to a wide set of subgroups~\citep{gohar2023}, e.g., by defining intersectional groups as the combination of different demographic attributes (e.g. age, gender, race)~\citep{kearns2018}. However, this approach can lead to issues where the subgroups no longer correspond to meaningful entities. For instance, intersecting many dimensions can result in subgroups that are too specific and lack a meaningful reason for grouping (e.g., "Jewish white males aged 18-35 who have a bachelor's degree and live in a rural area"). In contrast, JR focuses on \emph{cohesive} groups—groups of users who can agree on at least one comment in common—thereby inherently embedding \emph{some} requirement for meaningfulness.

Finally, in cases where it is representation across pre-defined groups (e.g. based on demographic labels) that is important, these groups can still be considered through the score function $\score$. We have already given one example, the diverse approval score function $\da$ defined in \Cref{eq:da}, that uses explicitly defined groups in its computation.

\newpage
\section{An Example of Conflict between Diverse Approval and JR}\label{app:ex}

Here, we provide an example where minimax diverse approval $\da$ leads to sets that do not satisfy JR. The example is similar to that of \Cref{fig:front-figure}. There are two groups, $G_1$ and $G_2$, that are being bridged in the diverse approval objective function $\da$ (see \Cref{eq:da}). However, the only items that receive approval across both groups are because of two individuals $F$ and $G$ who approve of them (and no others). Thus, maximizing diverse approval leads to a set that $F$ and $G$ are represented in, but no one else. 

    \begin{figure}[h]
        \centering
        \includegraphics[width=0.5\linewidth]{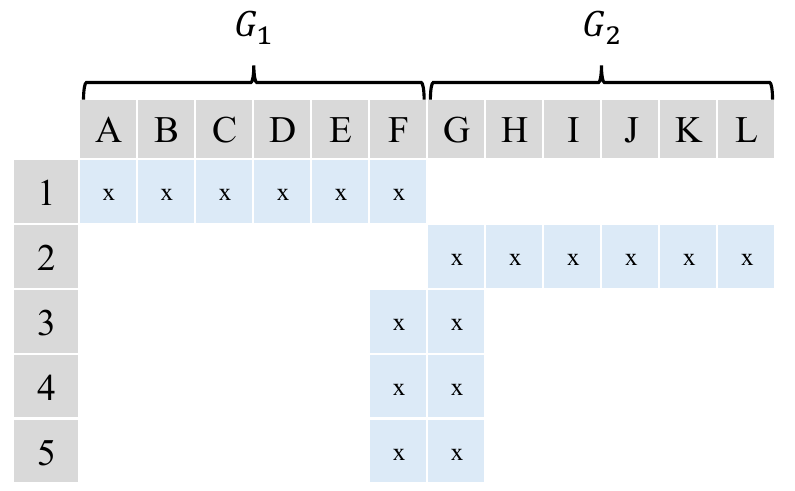}
        \caption{\textbf{When diverse approval fails at representativeness} An approval profile with $n=12, m=5$ and $k=3.$ There are $\gamma=2$ cohesive groups $G_1=\{A, B, C, D, E, F\}$ and $G_2=\{G, H, I, J, K, L\}$ such that $|G_{1;2}|=6>n/k.$ Note that the sub-groups such as $\{A, B, C, D\}$ and $\{I, J, K, L\}$ are also cohesive and larger than $n/k$, so they ought to also be represented by JR. In turn, any JR set $S$ contains $\{1, 2\}.$ On the other hand, the comments' maximin diverse approval scores (across the groups $G_1$ and $G_2$) are $\score(1)=\score(2)=0$ and $\score(3)=\score(4)=\score(5)=1/6.$ In turn, the set with the highest score is $S^* = \{3, 4, 5\}.$}
        \label{fig:ex1}
    \end{figure}

\newpage
\section{Proofs} \label{app:proofs}
\subsection{Proof of Proposition~\ref{prop:unbounded}}
\begin{proof} \label{proof::unbounded}
Let $m > k$ items and $\gamma=k$ groups each of size $n/k$ such that every group member in group $G_i$ approves of one item $s_i$ unique to the group. Each group $G_i$ is cohesive and of size $n/k$ so a JR set shall include all these unique items $s_i$ for all $i \in [\gamma].$ Recall that a general scoring rule (e.g. bridging) may be based on the output of a classifier whose input are the items (e.g. comments) and henceforth, be independent of the approval profile. In turn, we can construct an instance where the general scoring rule is such that $\score(s_i)=\varepsilon$ for all these items and a larger constant $c$ for at least one of the remaining items. Then $\score(S^*_{JR})=k\varepsilon$ while $\score(S^*)\ge c$ is a positive constant. We can make $\varepsilon$ arbitrarily small for an unbounded $P(k)$.
\end{proof}

\subsection{Proof of Theorem~\ref{thm:approval-dependent-jr}}
We first prove the following lemma. This statement can be found in \cite{bredereck2019experimental} (Theorem 3), but the authors do not provide a proof for the result in their paper so we write one below.
\begin{lemma}
    If item $i^*$ is approved by one person, then there exists a JR set that contains $i^*.$
    \label{th3-pre}
\end{lemma}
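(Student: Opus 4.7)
The plan is to adapt the standard Greedy Cohesive Capture (GreedyCC) construction of a JR set (cf.\ Aziz et al.\ 2017; Bredereck et al.\ 2019), but seed the greedy run with $i^{*}$ already in the selected set.

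First I would initialize $S \gets \{i^{*}\}$. By hypothesis there is some voter $u^{*}\in[n]$ with $i^{*}\in A_{u^{*}}$, so $u^{*}$ is already represented by $S$. Next, I would iterate the usual cohesive-capture step: as long as there is a cohesive group $G$ of \emph{currently unrepresented} voters with $|G|\ge n/k$, pick any common item $c\in \bigcap_{u\in G}A_u$ and set $S\gets S\cup\{c\}$. Each such iteration adds exactly one item to $S$ and marks at least $n/k$ additional voters as represented.

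The key quantitative step is bounding the number of iterations. Let $r_j$ denote the number of represented voters after $j$ greedy iterations; then $r_0\ge 1$ (thanks to $u^{*}$) and $r_j\ge r_{j-1}+n/k$, so $r_j\ge 1+j\cdot n/k$. Taking $j=k-1$ gives $r_{k-1}\ge 1+(k-1)\cdot n/k = n+1-n/k$, i.e.\ fewer than $n/k$ unrepresented voters remain. Since a cohesive group of size $\ge n/k$ cannot fit inside a set of fewer than $n/k$ voters, the loop must terminate after at most $k-1$ greedy steps, and the final set satisfies $|S|\le 1+(k-1)=k$. If $|S|<k$, pad $S$ with arbitrary unused items; padding can only increase the representation of any group, so it preserves JR.

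Finally I would verify JR directly: on termination, suppose for contradiction there were a cohesive group $H$ with $|H|\ge n/k$ not represented by $S$. Then every voter of $H$ is unrepresented, so $H$ would be an unrepresented cohesive group of size $\ge n/k$, contradicting the termination condition. Hence $S$ satisfies JR and contains $i^{*}$ by construction. The main (minor) obstacle is the counting step: one has to observe that seeding with $i^{*}$ costs one slot in $S$ but buys one represented voter for free, and that these two effects cancel so that $k-1$ subsequent greedy steps exactly suffice, irrespective of whether $n/k$ is integral (the bound $r_{k-1}\ge n+1-n/k$ leaves strictly fewer than $n/k$ unrepresented voters, which is what the cohesiveness-size threshold requires).
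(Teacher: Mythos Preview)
Your argument is correct and takes a genuinely different route from the paper's. The paper splits into two cases: either some $n/k$-justifying set of size strictly less than $k$ exists (in which case one simply adds $i^{*}$ and pads), or every $n/k$-justifying set has size exactly $k$; for the second case the paper invokes a structural characterization (essentially, the instance must partition into $k$ disjoint blocks of exactly $n/k$ voters, each with its own dedicated item) and then swaps $i^{*}$ for the dedicated item of the block containing a voter who approves $i^{*}$. That characterization is asserted but not proved in the paper.

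Your seeded-GreedyCC argument bypasses the case split and the unproved structural claim entirely: the counting bound $r_{k-1}\ge 1+(k-1)n/k$ directly shows that at most $k-1$ cohesive-capture steps suffice after seeding, so $|S|\le k$ always. This is both more elementary and fully self-contained, and it yields an explicit algorithm for producing the desired JR set. The paper's swap trick, on the other hand, is a bit more illuminating about \emph{why} the tight case looks the way it does, but at the cost of relying on a claim it does not justify. Overall your approach is cleaner; the only cosmetic point is that each greedy step actually captures $\ge\lceil n/k\rceil$ voters, which only strengthens your inequality.
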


\begin{proof}
Let us distinguish two cases.
\textit{Case 1:} There exists a set of size strictly less than $k$ that satisfies JR. Then, trivially, take an arbitrary set $S$ that satisfies JR. If item $i^*$ is not already in $S,$ add $i^*$ (and complete with any remaining item if necessary to create a set of size $k$). This set of size $k$ satisfies JR.

\textit{Case 2:} The only sets that satisfy JR are of size $k.$ This happens if and only if $\exists$ a set $T$ of size $k$ such that $\forall i \in T,$ $\exists$ a group of $n/k$ users $g = \{u_1, \cdots, u_{n/k}\}$ such that $i \in \cap_{u\in g} A_{u}$ and $i \notin A_{u'} \forall u' \in [n]\setminus g.$ In other words, we are in Case 2 in there exists a set $T$ of size $k$ such that every item $t \in T$ is approved by exactly $n/k$ people why do not approve of any other item in $T.$ Note that such $T$ satisfies JR. 

Recall that $i^*$ is approved by at least one person who belongs to one of the $k$ groups $g$ of size $n/k.$ That is, there exists a user group $g$ of size $n/k$ such that $u \in g$ and $i^* \in A_{g}.$ We will denote by $i_g$ the item in $T$ that is approved by all users in $g$ and no users in $[n]\setminus g.$

If $i^* \notin T,$ define $S = T\setminus\{i_g\}\cup\{i^*\}.$ Then, $S$ is a set of size $k$ that satisfies JR given that $i^*$ represents the same group as $i_g.$ 
\end{proof}

Let's next prove Theorem~\ref{thm:approval-dependent-jr}.
\begin{proof}
Let $i^*$ be the item with the highest score $s^*.$ By additivity of the scoring rule, for any set $S$ of size $k$ and any approval profile $\mathcal{A}_n,$ $\score(S, \mathcal{A}_n)\leq ks^*.$ In particular, 
\begin{equation}
    \score(S^*, \mathcal{A}_n)\leq ks^*.
    \label{eq_1.1}
\end{equation}

Next, by approval dependency of the scoring rule $\score,$ there must be at least one user who approves item $i^*$ (otherwise, its score is $0$ as well as that of any other item and, by approval dependency, there does not exist any item approved by anyone, a trivial case for which the price is not defined). By \ref{th3-pre}, there exists a JR set $S_{JR}$ that contains item $i^*.$ This may not be the optimal JR set $S^*_{JR},$ but, by definition of the optimal set, $\score(S^*_{JR}, \mathcal{A}_n) \geq \score(S_{JR}, \mathcal{A}_n).$

Further note that, by additivity again,
\begin{equation}
    \score(S_{JR}, \mathcal{A}_n) \geq s^*
    \label{eq_1.2}
\end{equation}

Combining \Cref{eq_1.1,eq_1.2}, we get that for any approval profile $\mathcal{A}_n,$ $\frac{\score(S_{JR}, \mathcal{A}_n)}{\score(S^*_{JR}, \mathcal{A}_n)} \leq k.$

\end{proof}

\subsection{Proof of Theorem~\ref{thm:homo-as-cohesive}}
    \begin{proof}
         First, we prove that $P(k) \leq \frac{k}{k - \gamma}$. Let $S^* = \{s_1, \dots, s_k \}$ be the score maximizing $k$-set with respect to $f$, where the items are ordered by their scores such that $\score(s_1) \geq \score(s_2) \geq \dots \geq \score(s_k)$. Since each of the cohesive groups $G_1, \dots, G_\gamma$ unanimously approves of at least one item, we can construct a set $R$ with $|R| \leq \gamma$, containing one unanimously approved item from each group.  Since the groups $G_1, \dots, G_\gamma$ partition the set of users, every user approves of at least one item in $R$. Therefore, any set of size $k$ which contains $R$ satisfies JR. The remaining $k - \gamma$ can be chosen to maximize the scoring rule $f$. Consequently, there exists a set $S$ which satisfies JR and contains the $k-\gamma$ highest scoring items, $s_1, s_2, \dots, s_{k-\gamma}$. This implies that $\score(S)$ is at least $\frac{k-\gamma}{k} \score(S^*)$, which shows that $P(k) \leq \frac{k}{k-\gamma}$.

         We can adapt the above to show that the bound can be tight. Let each group be of equal size $\frac{n}{\gamma} > \frac{n}{k}$ and each group unanimously approve of exactly one distinct item so that $|R|=\gamma$ (in turn, users do not approve of any items other than their group's designated item). Note that every JR set must contain the set $R.$ Therefore, if the items are such that for all $r \in R$, $\score(r)=0$  and for all $s \in S^*$, $\score(s) = c$ for some positive constant $c$ (the top $k$ highest item scores are equal), then the maximum score for a JR set is $\score(S_{JR}^*) = (k-\gamma)c$ while the optimal (non-JR) score is $\score(S^*) = kc.$ This establishes that $P(k) = \frac{k}{k-\gamma}$ for certain scoring functions.

         We show that diverse approval is one such scoring function. Let $\gamma = \frac{nk}{n+k}$ (WLOG assume $n+k$ divides $nk$) groups with one consensus item per group (suppose it is item $x_i$ for group $i=1,\ldots,\gamma$) approved only by the members of that group. Further, there is one unique member in each group that approves of an additional $k$ ``shared'' items (for concreteness suppose these items are $y_1,\dots , y_k$; therefore each group has a member that approved of all these items). There are hence $m=\gamma+k$ items such that the first $\gamma$ items $x_1, \ldots, x_\gamma$ are approved by $n/k+1$ persons within each group and have a diverse approval score of $0$, and the last $k$ items $y_1,\ldots , y_k$ are approved by $\gamma$ persons with a diverse approval score of $\frac{k}{k+n}.$ Each divided group contains a cohesive sub-group of size $n/k$ (these individuals all approve of item $x_i$) so a JR committee needs to include $x_1, \ldots, x_\gamma$. A set $S$ can become a JR set by completing it with $k-\gamma$ of the remaining items $y_1, \ldots, y_{k-\gamma}$. In turn, $\score(S^*_{JR})=(k-\gamma)\frac{k}{k+n}$ and $\score(S^*) = k\frac{k}{k+n}.$ 
    \end{proof}

\subsection{Proof of Theorem~\ref{thm:mallows}}

There are several generative interpretations of the Mallows model~\citep{lu:jmlr14} but one that is particularly useful for analyzing homogeneity is based on the Repeated Insertion Model (RIM)~\citep{doignon2004repeated}. In our analysis, we make use of a modified version of RIM in which items from the reference ranking are inserted starting with the last ranked item and ending with the first ranked item. It can be shown this bottom up RIM sampling induces the same distribution as the usual top down RIM sampling.

\vskip 3mm

\begin{Frame}
\begin{center}
\textbf{Bottom Up Sampling of Mallows $\pi\sim M(\phi, \sigma)$}
\end{center}
\begin{enumerate*}\denselist
\item Let $\sigma_i$ denote the item ranked $i$-th in the reference ranking $\sigma$.
\item Initialize $\pi$ as an empty ranking.
\item Loop $i=m, m-1, \ldots, 1$:
\begin{itemize}\denselist
\item Insert $\sigma_i$ into $\pi$ at rank position $j$ where 
$1 \le j \le m-i+1$ with probability $\phi^{j-1} / (1 + \phi + \cdots + \phi^{m-i})$.
\end{itemize}
\end{enumerate*}
\end{Frame}

\vskip 3mm

This generative version of Mallows gives us a nice way to express and bound the probability the top items are ranked below a given threshold.

\begin{lemma} \label{lem:tops-notin-topt}
   Let $\pi \sim M(\phi, \sigma)$. Then for all $s \le m$, the probability that none of the top $s$ items in the reference ranking $\sigma$ appear in the top $\tau$ items of $\pi$ is 
   \begin{align}
    \Pr(\pi(1) > \tau, \pi(2) > \tau, \dots, \pi(s) > \tau)
    & \leq 
    \begin{cases} 
      \phi^{\tau s} (1-\phi^{m-\tau})^s/(1-\phi^{m})^s, & \textrm{if } \phi < 1\\
      (1-\tau/m)^s &\textrm{if } \phi=1
    \end{cases}
 \label{ineq:tops-notin-topt}
   \end{align}
\end{lemma}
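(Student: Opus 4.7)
The strategy is to exploit the Bottom-Up RIM representation just described and reduce the joint event $\bigcap_{i=1}^s \{\pi(i) > \tau\}$ to the simpler event $\bigcap_{i=1}^s \{j_i > \tau\}$, where $j_i \in \{1, \dots, m-i+1\}$ denotes the insertion position of $\sigma_i$. Because the $j_i$'s are drawn independently according to the explicit RIM distribution, this reduction immediately turns the joint probability into a product whose factors match the right-hand side of the stated bound.

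One direction of the equivalence is immediate: items inserted after $\sigma_i$ (namely $\sigma_{i-1}, \dots, \sigma_1$) can only push $\sigma_i$ downward, so $\pi(i) \geq j_i$, and hence $j_i > \tau$ implies $\pi(i) > \tau$. For the substantive converse, suppose $\pi(l) > \tau$ for every $l \in [s]$ and fix $i \in [s]$. The key structural fact about RIM insertions is that the relative order of items already in the ranking is never swapped by a later insertion (a new item only shifts all positions at or below the insertion point down by one). Consequently, the set of items from $\{\sigma_{i+1}, \dots, \sigma_m\}$ lying above $\sigma_i$ in the final $\pi$ is exactly the set lying above $\sigma_i$ right after $\sigma_i$'s insertion, so its cardinality equals $j_i - 1$. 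On the other hand, by hypothesis none of $\sigma_1, \dots, \sigma_s$ occupies a position in $\{1, \dots, \tau\}$, so, using $i \leq s$, the top $\tau$ positions are filled entirely by items of $\{\sigma_{s+1}, \dots, \sigma_m\} \subseteq \{\sigma_{i+1}, \dots, \sigma_m\}$; since $\pi(i) > \tau$, all $\tau$ of these items lie above $\sigma_i$. Thus $j_i - 1 \geq \tau$, i.e.\ $j_i > \tau$.

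With the equivalence in hand, independence of the $j_i$'s yields $\Pr(\bigcap_{i=1}^s \{\pi(i) > \tau\}) = \prod_{i=1}^s \Pr(j_i > \tau)$. A direct geometric sum gives $\Pr(j_i > \tau) = \phi^\tau(1-\phi^{m-i+1-\tau})/(1-\phi^{m-i+1})$ for $\phi<1$, and a one-line algebraic check shows that the map $n \mapsto (1-\phi^{n-\tau})/(1-\phi^n)$ is non-decreasing in $n$, so each factor is bounded above by its $i=1$ value, namely $\phi^\tau(1-\phi^{m-\tau})/(1-\phi^m)$. Taking the product yields the first bound. The $\phi = 1$ case is analogous: each $j_i$ is uniform on $\{1, \dots, m-i+1\}$, so $\Pr(j_i > \tau) = (m-i+1-\tau)/(m-i+1)$ is maximized at $i = 1$ with value $1 - \tau/m$, and the product gives the second bound.

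The main obstacle is the converse direction of the equivalence, which combines two observations that are easy to miss: (i) RIM insertions preserve the relative order of already-present items, so the count of items from $\{\sigma_{i+1}, \dots, \sigma_m\}$ above $\sigma_i$ is frozen at $j_i - 1$ the moment $\sigma_i$ is inserted; and (ii) because $i \leq s$, the containment $\{\sigma_{s+1}, \dots, \sigma_m\} \subseteq \{\sigma_{i+1}, \dots, \sigma_m\}$ forces the top $\tau$ items of $\pi$ to come from the ``old'' set, so they must all be counted among the $j_i - 1$ items above $\sigma_i$. Once this step is secured, the independence factorization and the monotonicity check are routine.
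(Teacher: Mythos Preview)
Your proof is correct and follows essentially the same route as the paper: both exploit the Bottom-Up RIM representation to reduce the event $\{\pi(i)>\tau \text{ for all } i\le s\}$ to the event that every insertion position $j_i$ exceeds $\tau$, then factorize by independence and bound each factor by its $i=1$ value. Your direct argument for the equivalence (counting the $j_i-1$ ``old'' items above $\sigma_i$ and noting the top~$\tau$ slots must be filled from $\{\sigma_{s+1},\dots,\sigma_m\}$) is a cleaner packaging of the paper's monotonicity argument for the auxiliary quantity $g_\ell$, but the substance is identical.
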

\begin{proof}

Let $\sigma_i$ be the $i$-th ranked item in reference
ranking $\sigma$ and $\pi^{(i)}$ be the ranking obtained after inserting items $\sigma_{m},\ldots , \sigma_{i}$. Therefore the final sampled ranking $\pi = \pi^{(1)}$. Let $S = \{\sigma_i : 1\le i\le s\}$
and $U = \{\sigma_i : s+1\le i \le m\}$. 

Case 1: $s\ge m-\tau$. Then there is not enough items in $U$ to fit in the top $s$ items of $\pi$ because $|U|\le \tau$. Therefore an item from $S$ must be ranked 
in one of the top $s$ positions of $\pi$. Hence $\Pr(\forall i\le s  : \pi(i) > \tau) = 0$.

Case 2: $s< m-\tau$. We proceed by induction on $s$. Base case we have $s=1$ and it is required that $\sigma_1$ be inserted at position $\tau+1$ or below which occurs with probability $(\phi^\tau + \cdots + \phi^{m-1}) / (1+\phi+\cdots + \phi^{m-1})$.

Suppose Ineq.~(\ref{ineq:tops-notin-topt}) holds for all $s$ up to some $s\ge 1$. Let $i=m,m-1,\ldots, 1$ be the loop iteration index as in the Bottom Up Sampling procedure and define $\ell=s-i+1$. The first item from $S$ gets inserted when $\ell=1$ (i.e. $i=s$).
Let $g_\ell = \min \{ \pi^{s-\ell+1}(\sigma_r) | s-\ell+1 \le r \le s\}$ be the rank position of the highest ranked item from $S$ at iteration with index $i=s-\ell+1$. We show $g_\ell$ is non-increasing in $\ell$.
We can argue by induction on $\ell$. In base case when $\ell=1$, $g_1$ is the position item $\sigma_s$ is in $\pi^{(s)}$. Suppose $g_r$ is non-increasing for all $r$ up to some $\ell$. Consider $\ell+1$ when we insert $\sigma_{s-\ell+2}$, if it is inserted above position $g_\ell$ then $g_{\ell +1} < g_\ell$ and if it is inserted at or below $g_\ell$ then $g_{\ell + 1} = g_\ell$.
Therefore each item in $S$ must be inserted at position $\tau +1$ or below with probability $(\phi^\tau + \cdots + \phi^{m-i}) / (1+\cdots +\phi^{m-i})$.
Applied for all items in $S$ we multiply these insertion probabilities,
\begin{align}
\Pr(\forall i\le s, \pi(s) > \tau) &= \prod_{i=1}^s \frac{\phi^\tau + \cdots + \phi^{m-i}}{1+\cdots +\phi^{m-i}} \nonumber \\
&= 
\begin{dcases}
    \prod_{i=1}^s \phi^\tau \frac{1-\phi^{m-i-\tau+1}}{1-\phi^{m-i+1}} ,& \textrm{if } \phi < 1\\
    \prod_{i=1}^s \frac{m-i-\tau+1}{m-i+1} & \textrm{if } \phi = 1
\end{dcases} \nonumber \\
&\le
\begin{dcases}
    \prod_{i=1}^s \phi^\tau \frac{1-\phi^{m-\tau}}{1-\phi^m}, &\text{if } \phi < 1 \\
    \prod_{i=1}^s 1-\tau/m, &\text{if } \phi=1
\end{dcases} \label{ineq:tops-quotient}
\end{align}
where the inequalities are obtained when the quotient is maximized at $i=1$.
We get the upper bound in Ineq.~(\ref{ineq:tops-notin-topt}) by noticing the terms Ineq.~(\ref{ineq:tops-quotient}) no longer depends on $i$.
\end{proof}

\begin{proof}[Proof of Theorem~\ref{thm:mallows}]
    We begin by showing that there exists an $n/k$-justifying set of size $q$ with probability at least $1-\delta$. The bound on the price of justified representation is derived directly from this result. Consider the set $S = \sigma_1^{-1}([s]) \cup \sigma_2^{-1}([s]) \cup \dots \cup \sigma_\gamma^{-1}([s])$ comprised of the top $s$ items in each group's reference ranking. The probability that there exists an $n/k$-justifying set of size $\gamma \cdot s$ is greater than or equal to the probability that the set $S$ is an $n/k$ justifying set, which we can lower bound as follows:
    \begin{align}
       &  \Pr(\exists S' \text{ s.t. } |S'| = \gamma s \text{ and } S' \text{ is an $n/k$ justifying set} ) \nonumber \\
        & \geq \Pr(S \text{ is an $n/k$ justifying set}) \nonumber \\
        & \geq \Pr(|\{u : A_u \cap S = \emptyset \}| < n/k) \nonumber \\ %
        & = 1 - \Pr(|\{u : A_u \cap S = \emptyset \}| \geq n/k) \nonumber \\
        & = 1 - \Pr\left(\sum_{u=1}^{n}X_u \geq n/k\right) \nonumber \\
        & \geq 1 - k \cdot 
        \begin{cases} 
           \phi_{\max}^{s\tau} (1-\phi_{\max}^{m-\tau})^s/(1-\phi_{\max}^m)^s, &\text{if } \phi < 1\\
         (1-\tau/m)^s &\text{if } \phi=1
        \end{cases}
        \label{ineq:markov}
    \end{align}
    where $X_u = 1\{u \mid A_u \cap S = \emptyset \}$ is the random variable indicating whether user $u$ has no item they approve of in the set $S$. By \Cref{lem:tops-notin-topt}, we can bound
    $\Pr(X_u = 1)$ as in Ineq.~(\ref{ineq:tops-notin-topt}). 
    Thus, Ineq.~(\ref{ineq:markov}) holds by Markov's inequality. When $\phi_{\max} < 1$, solving $1 - k \cdot \phi^{s\tau} (1-\phi_{\max}^{m-\tau})^s/(1-\phi_{\max}^m)^s \geq 1 - \delta$ for the parameter $s$, yields 
    $s \geq \log(k/\delta) / \log((1-\phi^m_{\max})/(\phi_{\max}^{\tau}(1-\phi_{\max}^{m-\tau}))).$
    In turn, with probability at least $1-\delta$, there exists an $n/k$ justifying group of size 
    $\gamma \lceil \log(k/\delta)/\log((1-\phi^m_{\max})/(\phi_{\max}^{\tau}(1-\phi_{\max}^{m-\tau})))  \rceil\,.$  When $\phi_{\max} = 1$, the same calculation provides that, with probability at least $1-\delta$, there exists an $n/k$ justifying group of size
    $\gamma  \lceil \log(k/\delta) / \log(m/(m-\tau))  \rceil\,.$
   
    When $|S| \leq k$, any set $W$ of size $k$ that contains $S$ satisfies justified representation. For the remaining items in $W$, we can pick the $\max \left (0, k- q \right)$ items with the highest bridging score. Thus, the highest bridging score $\score(S_{JR}^*)$ attainable by a set that satisfies justified representation is at least $\frac{\max(0, k - d)}{k}\score(S^*)$ where $S^*$ is the bridging-optimal set (unconstrained for justified representation). Therefore, we have the desired result that the price of justified representation is such that $P(\mathcal{I}_{m, \mathcal{A}_n,k}, f) \leq k / \max(0, k - q)$.
\end{proof}

\newpage 

\section{Simulations}\label{app:exp}
In the following sections, we investigate the implications of our theoretical findings in simulations in the Mallows mixture model.

\subsection{The GreedyCC approximation algorithm} First, we review the GreedyCC algorithm used in both sets of experiments. Given that solving for the optimal JR set $S^{*}_{JR}$ is NP-hard in general~\citep{bredereck2019experimental,elkind2022price}, we would expect practitioners to resort to approximation algorithms. Thus, to understand the effects we might expect in practice, in all our experiments, we too employ an approximation algorithm, along with an associated approximate price of JR. Specifically use the GreedyCC algorithm, which has also been used in prior work~\citep{elkind2022price}, to find a set $\greedyset$ that satisfies JR and approximately maximizes the score function $f$. In brief, our implementation of GreedyCC greedily adds items to the selected set until a $n/k$-justifying set is found; the remaining items are then selected to maximize the score function $f$. A full description of the algorithm is outlined in Alg.~\ref{algo:greedy-cc}.

Given an instance $\mathcal{I}_{m, \mathcal{A}_n, k}$, we also define an associated price of GreedyCC based on comparing the conversational score of the set $\greedyset$ returned by GreedyCC and the optimal set $S^*$ (which is not constrained to satisfy JR):
\begin{align}
& P^{\text{Greedy}}(\mathcal{I}_{m, \mathcal{A}_n, k}, f) = \frac{\score(S^*(\mathcal{I}_{m, \mathcal{A}_n, k}, \score))}{\score(\greedyset(\mathcal{I}_{m, \mathcal{A}_n, k}, \score))} \,, \\
& P^{\text{Greedy}}(k, f) = \max_{\mathcal{I}_{m, \mathcal{A}_n, k}} P^{\text{Greedy}}(\mathcal{I}_{m, \mathcal{A}_n, k}, f) \,.
\end{align}
\begin{algorithm}[h]  
\begin{algorithmic}[1]
\STATE Initialize $V = [n]$ as the set of all voters
\STATE Initialize $S \subseteq [m]$ as the set of all items that are approved by at least $n/k$ voters in $V$
\STATE Initialize $W = \emptyset$
\WHILE{$|W| < k$}
    \IF{$|S| > 0$} 
        \STATE Add $i^* = \argmax_{i \in S \setminus W} |\{u \in V \, :\, i \in A_u  \}| $ to $W$  \COMMENT{Stage 1}
    \ELSE  
        \STATE Add $i^* = \argmax_{i \not\in W} \score(i, \mathcal{A}_n)$ to $W$ \COMMENT{Stage 2}
    \ENDIF
    \STATE $V \leftarrow$ the set of voters who do not yet approve of an item in $W$
    \STATE $S \leftarrow$ the set of all items that are approved by at least $n/k$ voters in $V$ 
\ENDWHILE 
\STATE return $W$ 
\end{algorithmic} 
\captionsetup{justification=raggedright,singlelinecheck=false,font=small,position=bottom} %
\caption{The \emph{GreedyCC} algorithm proceeds in two stages. In the first stage, it identifies an $n/k$-justifying set by greedily selecting comments to maximize coverage.\protect\footnotemark Once an $n/k$-justifying set is found, the algorithm proceeds to the second stage, where it fills any remaining slots with the comments with highest conversational norm scores.}
\label{algo:greedy-cc}
\end{algorithm}
\footnotetext{The coverage of a set of comments $S$ is the total number of users who approve of any comment in $S$.}

\subsection{Mallows Mixture Model Simulations} \label{app: mallows-exp}
In simulations with the Mallows model, we empirically investigate the price of GreedyCC and compare our findings to  the theoretical bounds derived in \Cref{sec:mallow}. Our theoretical bounds showed that when the population is clustered, in the sense that the population can be divided into a few groups that exhibit high within-group homogeneity, then the price of JR can be shown to be low. However, is this still true when using an approximation algorithm like GreedyCC, which is what a practitioner would use in practice?

\paragraph{Experimental setup.} To investigate this, we created a Mallows mixture model \\ $\mathcal{M}_2([\phi, \phi], [\sigma_1, \sigma_2], [1/2, 1/2])$ that simulates a polarized environment with two groups. We draw inspiration from \citet{esteban1994measurement}'s axiomatic framework for measuring polarization. \citet{esteban1994measurement} where they consider polarization to be present if three intuitive properties are satisfied. We operationalized these three properties in our simulation as follows:
\begin{itemize}
    \item \emph{Criteria 1: There must be a small number of significantly sized groups.} The mixture model had two components and each component had a probability of $1/2$.
    \item \emph{Criteria 2: There must be a high degree of heterogeneity across groups}. We chose the central rankings $\sigma_1$ and $\sigma_2$ of the two components to be exactly the opposite.
    \item \emph{Criteria 3: There must be a high degree of homogeneity within groups}. Each component had the same dispersion paremeter $\phi$, and we varied this parameter to see the impact of wihtin-group homogeneity, which our theoretical results depended upon.
\end{itemize}
 As in \Cref{sec:mallow}, we generate approval votes for a user's preference $\pi \sim \mathcal{M}_\gamma([\phi, \phi], [\sigma_1, \sigma_2], [1/2, 1/2])$ by thresholding and taking the top $\tau$ items to be approved.  Based upon the approval vote thresholding, we let $I$ denote the random variable corresponding to an instance simulated from the Mallows mixture model, and denote by $\mathcal{D}$ the dataset of $s$ i.i.d. instances from the Mallows mixture model. We evaluated (1) the expected price of GreedyCC, over instances, estimated using the sample mean $\mathbb{E}_s[{P^{\text{Greedy}}}(I, \score)]$, and (2) the worst-case price of GreedyCC observed in the dataset, $P_s^{\text{Greedy}}(k, f)$:
\begin{align}
\mathbb{E}_s[{P^{\text{Greedy}}}(I, \score)] &= \sum_{\mathcal{I}\in \mathcal{D}} P^{\text{Greedy}}(\mathcal{I}, \score)/s\,, \\
P_s^{\text{Greedy}}(k, f) &= \max_{\mathcal{I}\in \mathcal{D}}P^{\text{Greedy}}(\mathcal{I}, f)\,.
\end{align}

\raggedbottom

\paragraph{Results.} \Cref{fig:sim-res} shows how the price of GreedyCC changes as the dispersion parameters $\phi$ changes, for two scoring functions: engagement $\engscore(i, \mathcal{A}_n)$ (Equation \ref{eq:eng}) and maximin diverse approval $\mda(i, \mathcal{A}_n)$ where the two groups in the definition of $\mda$ (Equation \ref{eq:da}) are the two components in the Mallows mixture model. 

First, despite GreedyCC not being optimal, we find that the price of GreedyCC is low (close to one--the lowest possible value) for both engagement and diverse approval when there is low dispersion, as predicted by our theoretical results. For both engagement and diverse approval, we find that as the dispersion increases, the price of GreedyCC increases until the dispersion is very high (at which point cohesive groups are unlikely to exist), and the price decreases again.

Second, we find that the price of GreedyCC is higher for diverse approval than for engagement. For diverse approval, some instances result in a price of GreedyCC that nearly our theoretical bound from \Cref{thm:mallows}. The gap in price between diverse approval and engagement is consistent with our result in \Cref{thm:approval-dependent-jr}, which found that the price of JR for maximin diverse approval $P(k, \mda)$ equals $k$, whereas \citet{elkind2022price} demonstrated that the price for engagement $P(k, \engscore)$ is $\Theta(\sqrt{k})$. These results were shown, in the general setting, without any restrictions on the approval profiles. On the other hand, the theoretical bound we derived for the polarized setting in \Cref{thm:mallows}, and plotted in \Cref{fig:sim-res}, holds for all score functions. But given the empirical disparity in price between diverse approval and engagement, it seems plausible that a similar score-function-specific gap could be identified even in the polarized setting.

\begin{figure}[H]
    \centering\includegraphics[width=\linewidth]{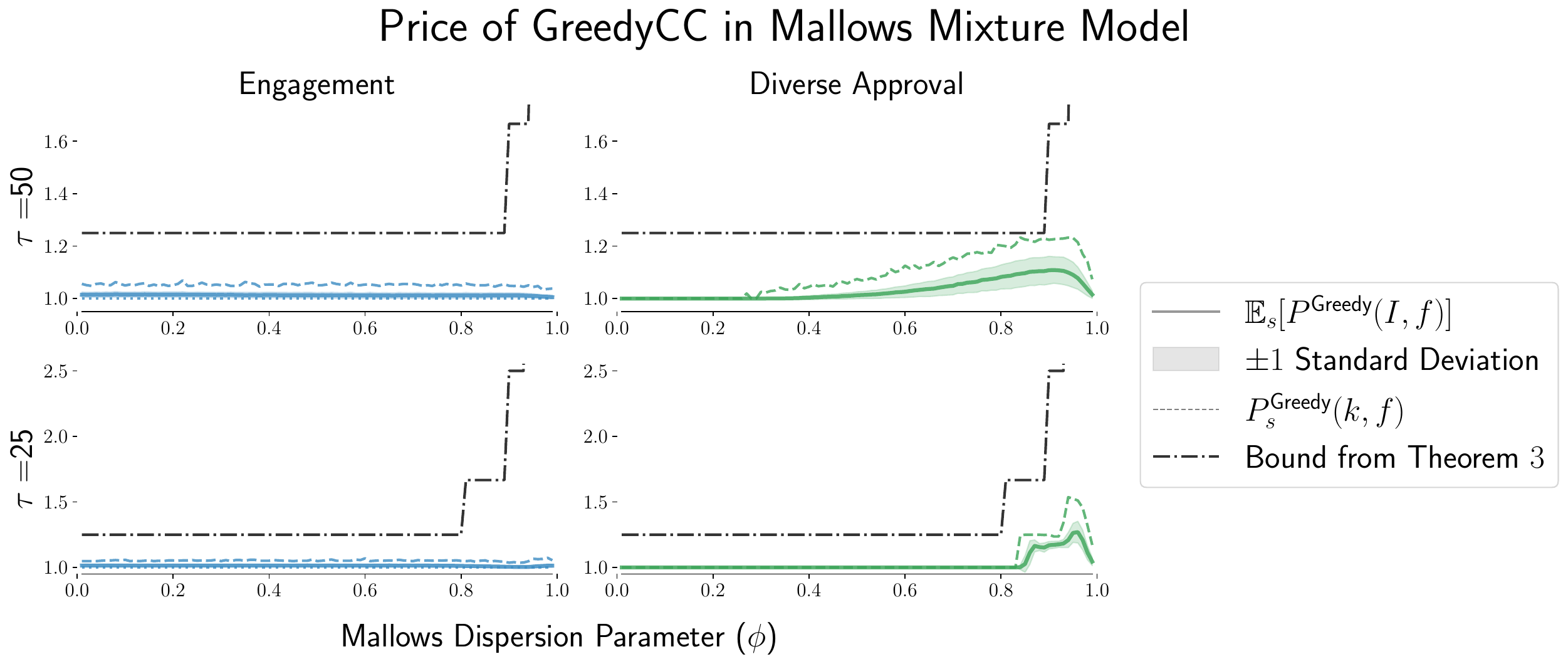}
     \caption{\textbf{The price of GreedyCC in a Mallows mixture model.} The price of GreedyCC is shown for both the engagement scoring rule $\score_{\text{eng}}$ (left) and diverse approval $\score_{\text{DA}}$ (right) with a $\tau = 25$ (top) and $\tau = 50$ (bottom). The simulations are for an instance $\mathcal{I}_{100, \mathcal{A}_{100}, 10}$ where the approval profile $\mathcal{A}_{100}$ is simulated through a Mallow mixture model with the Kendall-tau distance, $\gamma=2$ polarized groups with opposite reference rankings, and an approval threshold of$\tau=25.$ The average price, $\mathbb{E}_{s}[P^{\text{Greedy}}(I)]$ (solid lines), and the maximum observed price $\overline{P_{s}}^{\text{Greedy}}(I)$ (dotted lines), are both computed over $s = 1,000$ simulations for values of $\phi$ in $[0.1, 1]$ and a $0.01$-stepsize. The probabilistic bound from \Cref{th:pol2} (dash-dot black lines) is computed with a $95\%$ confidence, i.e., $\delta=0.05$.}
    \label{fig:sim-res}
\end{figure}

\newpage
\section{Experiments Ranking Comments on Remesh}\label{app:remesh}
In this section, we now provide supplementary details and results for our Remesh experiments from Sec. \ref{sec:exp}.
\subsection{Dataset Summary Statistics} \label{app:
remesh-data-stats}
For these experiments, we used data from Remesh sessions conducted in the summer of $2024$ that engaged a representative sample of Americans regarding their opinion of campus protests. The data are available at \url{https://github.com/akonya/polarized-issues-data}. The following table lists the $10$ questions that participants were asked about, across the two sessions, and the total number of participants and comments for each question.

\begin{table}[h]
\centering
\begin{tabular}{cp{10cm}cc}

    \toprule

    \textbf{ID} &\textbf{Questions} & \textbf{Participants ($n$)} & \textbf{Comments ($m$)} \\

    \midrule

    1 & What are your thoughts on the way university campus administrators should approach the issue of Israel/Gaza demonstrations? & $307$ & $306$ \\

    2 & What are your impressions of how campus administrators handled the protests? & $308$ & $308$ \\

    3 & What should guide university campus administrators handling of protests? & $301$ & $298$ \\

    4 & What are your impressions of the campus protests? & $310$ & $310$ \\

    5 & How has your personal experience with protests influenced your viewpoint on the right to assemble? & $105$ & $103$ \\

    6 & What characteristics or actions, in your view, deem a protest inappropriate? & $306$ & $297$ \\

    7 & What are some of the reasons you have not participated in or attended a protest? & $201$ & $200$ \\

    8 & What are features or characteristics that make a protest appropriate? & $307$ & $306$ \\

    9 & What measures (if any) could be taken to ensure appropriate protests are protected? & $305$ & $297$ \\

    10 & What measures (if any) could be taken to restrict or limit inappropriate protests? & $303$ & $284$ \\

    \bottomrule

\end{tabular}
\caption{\textbf{Remesh Survey Questions and Number of Participants ($n$) and Comments ($m$)}}
\label{tab:survey}
\end{table}

\subsection{Perspective API Classifiers}
We used three scoring functions to rank comments: engagement $\engscore$, diverse approval $\da$, and a bridging score based on Google Jigsaw's Perspective API $\classifier$. For the Perspective API score function $\classifier$, we used the average of the scores for the seven ``bridging'' attributes available in Perspective API (see \url{https://developers.Perspective API.com/s/about-the-api-attributes-and-languages?language=en_US}): Nuance, Compassion, Personal Story, Reasoning, Curiosity, Affinity, Respect. Below, we show the score distribution of the Remesh comments, for each attribute.

\begin{figure}[H]
    \centering
    \includegraphics[width=\linewidth]{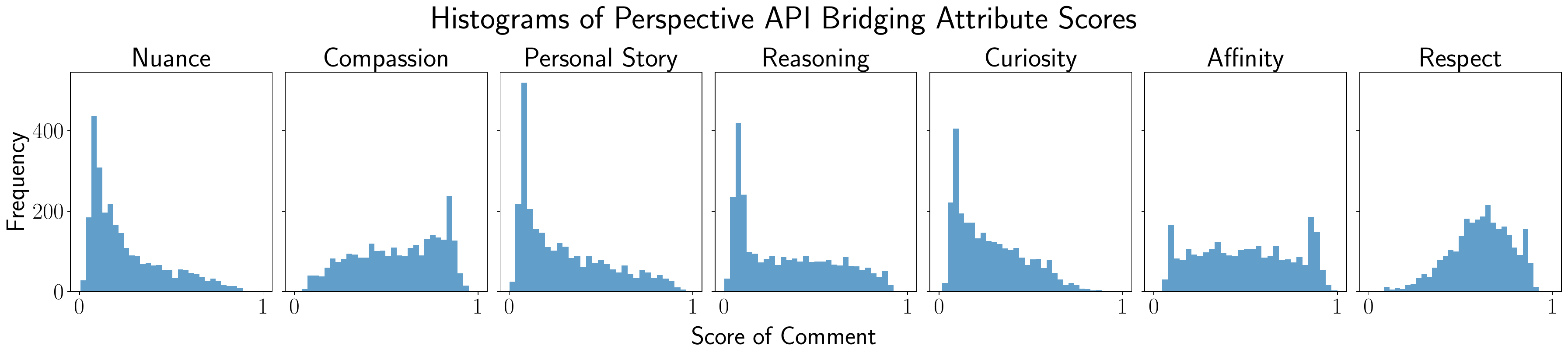}
    \caption{\textbf{Scores from the Perspective API}}
    \label{fig:scores}
\end{figure}
For our main results reported in \Cref{fig:remesh-coverage}, \Cref{tab:price}, \Cref{tab:representation} and \Cref{fig:remesh-coverage-2}, we use the average score across these seven attributes as the Perspective API score function $\classifier$. However, we also report results with each individual attribute in \Cref{app:pol-brek}.

\subsection{Price of GreedyCC when Ranking Comments using Diverse Approval}
In \Cref{tab:price}, we provide the price of GreedyCC $\pgreedy(\mathcal{I}_{m, \mathcal{A}_n, 8}, f)$ for each question (instance $\mathcal{I}$) and score function $f$. We also indicate how often the optimal set $S^*(f)$ (which was not explicitly constrained to satisfy JR) satisfied JR.
\begin{table}[h]
    \centering
    \begin{tabular}{ccccccc}
        \toprule
        & \multicolumn{2}{c}{\textbf{Engagement}} & \multicolumn{2}{c}{\textbf{Diverse Approval}} & \multicolumn{2}{c}{\textbf{Perspective API Index}} \\
        \cmidrule(lr){2-3} \cmidrule(lr){4-5} \cmidrule(lr){6-7}
        \\
        \textbf{Questions} & \textbf{$\pgreedy(\mathcal{I}, \engscore)$} & $\engscore$ feed is JR & \textbf{$\pgreedy(\mathcal{I}, \da)$} & $\da$ feed is JR & \textbf{$\pgreedy(\mathcal{I}, \classifier)$} & $\classifier$ feed is JR  \\
        \midrule
        1 & $1.03$ & False & $1.05$ & False & $1.11$ & True\\
        2 & $1.02$ & False & $1.02$ & True & $1.11$ & True\\
        3 & $1.00$ & True & $1.00$ & True & $1.06$ & True\\
        4 & $1.03$ & False & $1.03$ & False & $1.11$ & True\\
        5 & $1.09$ & False & $1.08$ & True & $1.07$ & True\\
        6 & $1.11$ & True & $1.11$ & True & $1.15$ & True\\
        7 & $1.06$ & False & $1.08$ & False & $1.08$ & True\\
        8 & $1.03$ & False & $1.03$ & False & $1.14$ & True\\
        9 & $1.07$ & False & $1.10$ & False & $1.19$ & True\\
        10 & $1.07$ & True & $1.08$ & True & $1.16$ & True\\
        \bottomrule
    \end{tabular}
    \caption{\textbf{Price of GreedyCC and frequency of satisfying JR for each score function and question.}}
    \label{tab:price}
\end{table}
\subsection{Size of the $n/k$-justifying Set Found by GreedyCC}
In \Cref{sec:mallow}, we show that when the user population clusters into a few groups, then the price of JR will be low because it is possible to find a small $n/k$-justifying set (and thus, the remaining items for the set can be chosen to maximize the score function). Indeed, this logic is how the GreedyCC approximation algorithm that we use operates. As detailed in \Cref{algo:greedy-cc}, the GreedyCC proceeds in two stages. In the first stage, it finds an $n/k$-justifying set. In the second stage, it selects the remaining comments that maximize the score function. Thus, the smaller the $n/k$-justifying set, the lower the price of GreedyCC will tend to be.

A small $n/k$-justifying set is expected in empirical settings where user opinions cluster into only a few groups (as is common for politically-charged questions). \Cref{tab:justifying-set} shows the size of the $n/k$-justifying set found by GreedyCC for all ten questions in the Remesh dataset. For all questions, the $n/k$-justifying set has at most only two comments. The small size of the $n/k$-justifying sets could then explain why the price of GreedyCC (\Cref{tab:price}) across all questions and score functions is so low (close to one in all cases).

\begin{table}[h]
    \centering
    \begin{tabular}{cccc}
        \toprule
        \textbf{Questions} & \textbf{$n/k-$justifying set}\\
        \midrule
        1 & $2$\\
        2 & $2$\\
        3 & $1$\\
        4 & $2$\\
        5 & $2$\\
        6 & $2$\\
        7 & $2$\\
        8 & $2$\\
        9 & $2$\\
        10 & $2$\\
        \bottomrule
    \end{tabular}
    \caption{\textbf{The size of the $n/k-$justifying set found by GreedyCC for each question.}}
    \label{tab:justifying-set}
\end{table}

\subsection{The Impact of JR on Representation of Participants}
In \Cref{fig:remesh-coverage}, we showed the proportion of participants who were unrepresented when ranking Remesh comments both with and without a JR constraint. Due to space constraints, we only included the five Remesh questions for which $S^*(\da)$ did not satisfy JR in \Cref{fig:remesh-coverage} under diverse approval.  Here, we present the results for the remaining five questions (\Cref{tab:representation} and \Cref{fig:remesh-coverage-2}). 

\begin{table}[h]
    \centering
    \begin{tabular}{ccccccc}
        \toprule
        & \multicolumn{2}{c}{\textbf{Engagement}} & \multicolumn{2}{c}{\textbf{Diverse Approval}} & \multicolumn{2}{c}{\textbf{Perspective API Index}} \\
        \cmidrule(lr){2-3} \cmidrule(lr){4-5} \cmidrule(lr){6-7}
        \\
        \textbf{Questions}  & \textbf{$S^*$}  & \textbf{$\greedyset$}  & \textbf{$S^*$}  & \textbf{$\greedyset$}  & \textbf{$S^*$}  & \textbf{$\greedyset$}  \\
        \midrule
        1 & $23$ & $6$  & $25$ & $6$ & $9$ & $3$\\
        2 & $31$ & $4$ & $23$ & $2$ & $9$ & $4$\\
        3 & $8$ & $8$ & $5$ & $5$ & $0$ & $1$\\
        4 & $22$ & $10$ & $15$ & $7$ & $9$ & $3$\\
        5 & $16$ & $1$ & $8$ & $2$ & $6$ & $2$\\
        6 & $10$ & $0$ & $10$ & $0$ & $5$ & $0$\\
        7 & $14$ & $4$ & $17$ & $3$ & $3$ & $0$\\
        8 & $25$ & $8$ & $28$ & $9$ & $3$ & $3$\\
        9 & $15$ & $3$ & $13$ & $0$ & $4$ & $0$\\
        10 & $12$ & $1$ & $10$ & $2$ & $1$ & $1$\\
        \midrule
        Average across all feeds & $\textbf{18}$ & $\textbf{5}$ & $\textbf{15}$ & $\textbf{4}$ & $\textbf{5}$ & $\textbf{2}$ \\  
        Average across the five feeds in \Cref{fig:remesh-coverage} & $\textbf{22}$ & $\textbf{4}$ & $\textbf{21}$ & $\textbf{4}$& $\textbf{6}$ & $\textbf{2}$\\  
        Average across the five feeds in \Cref{fig:remesh-coverage-2} & $\textbf{13}$ & $\textbf{5}$ & $\textbf{10}$ & $\textbf{3}$ & $\textbf{4}$ & $\textbf{1}$ \\  
        \bottomrule
    \end{tabular}
    \caption{\textbf{The percentage of users who were unrepresented when ranking with each score function, both with the JR constraint ($\greedyset$) and without the JR constraint ($S^*$)}}
    \label{tab:representation}
\end{table}

\begin{figure*}[h]
    \includegraphics[width=\textwidth]{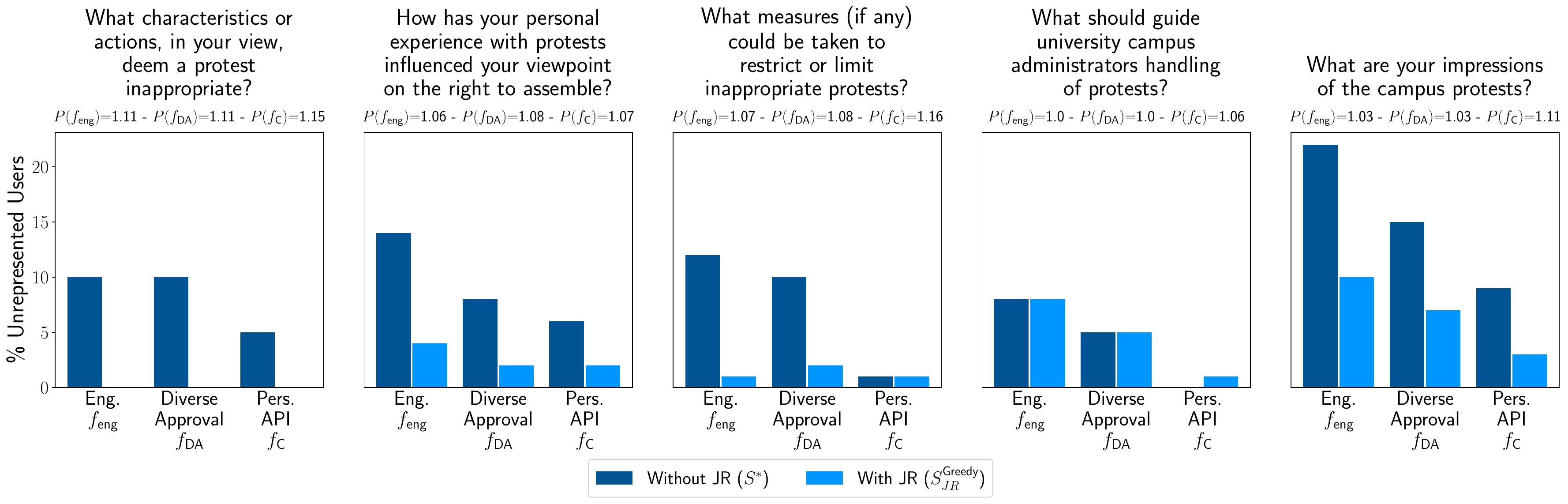}
    \caption{\textbf{Representation when selecting comments using diverse approval with and without a JR constraint (missing feeds).}}
    \label{fig:remesh-coverage-2}
\end{figure*}

In \Cref{fig:qual-coverage}, we plot the average score per comment for the set that maximizes the scoring function $f$ as a function of the percentage of users represented by the selected set. This is shown for both the unconstrained optimal set $S^*(f)$ and the set $\greedyset(f)$ that satisfies JR and approximately maximizes the score function $f.$ We effectively calculate $\frac{f(S^*(f))}{k}$ and $\frac{f(\greedyset(f))}{k}$ respectively.\footnote{We acknowledge an anonymous reviewer for suggesting this plot.}

\begin{figure*}[h]
    \includegraphics[width=\textwidth]{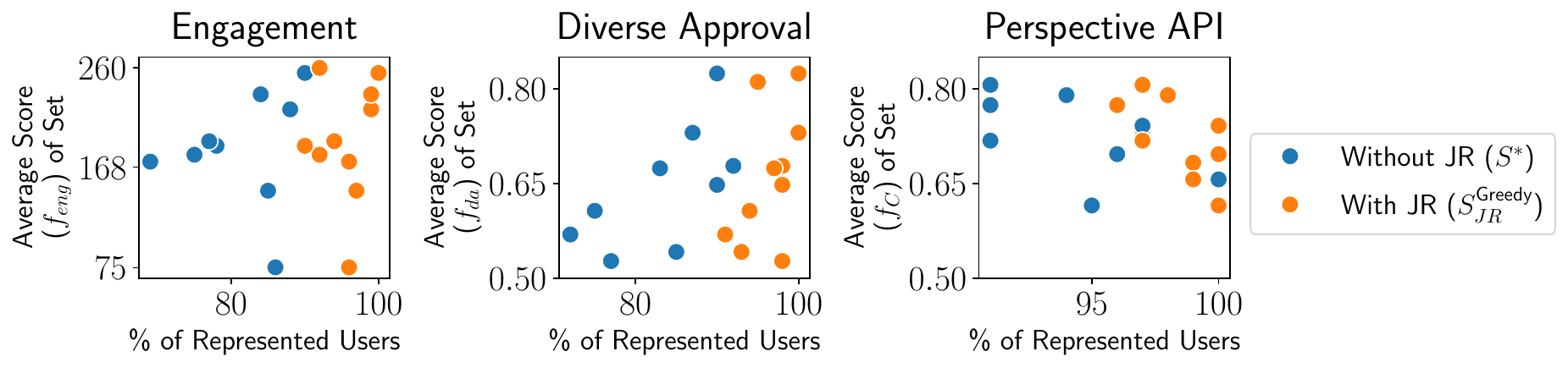}
    \caption{\textbf{The score-optimal sets $S^*$ and the JR sets $\greedyset$ for each Remesh question (see \Cref{tab:survey}) and score function (engagement, diverse approval, and the Perspective API bridging score).} Each set is plotted based on the average score of the items in the set and the percentage of users represented in the set. Compared to the score-optimal sets $S^*$, the JR sets $\greedyset$ notably increase the percentage of represented users without significantly reducing the score of the selected items.}
    \label{fig:qual-coverage}
\end{figure*}

\subsection{Results Per Political Ideology}\label{app:pol-brek}
In this section, we report the representativeness results across the different political groups for the three scoring functions: engagement $\engscore$, diverse approval $\da$, and the Perspective API bridging classifier $\classifier$.

\begin{figure}[H]
    \centering
    \includegraphics[width=\linewidth]{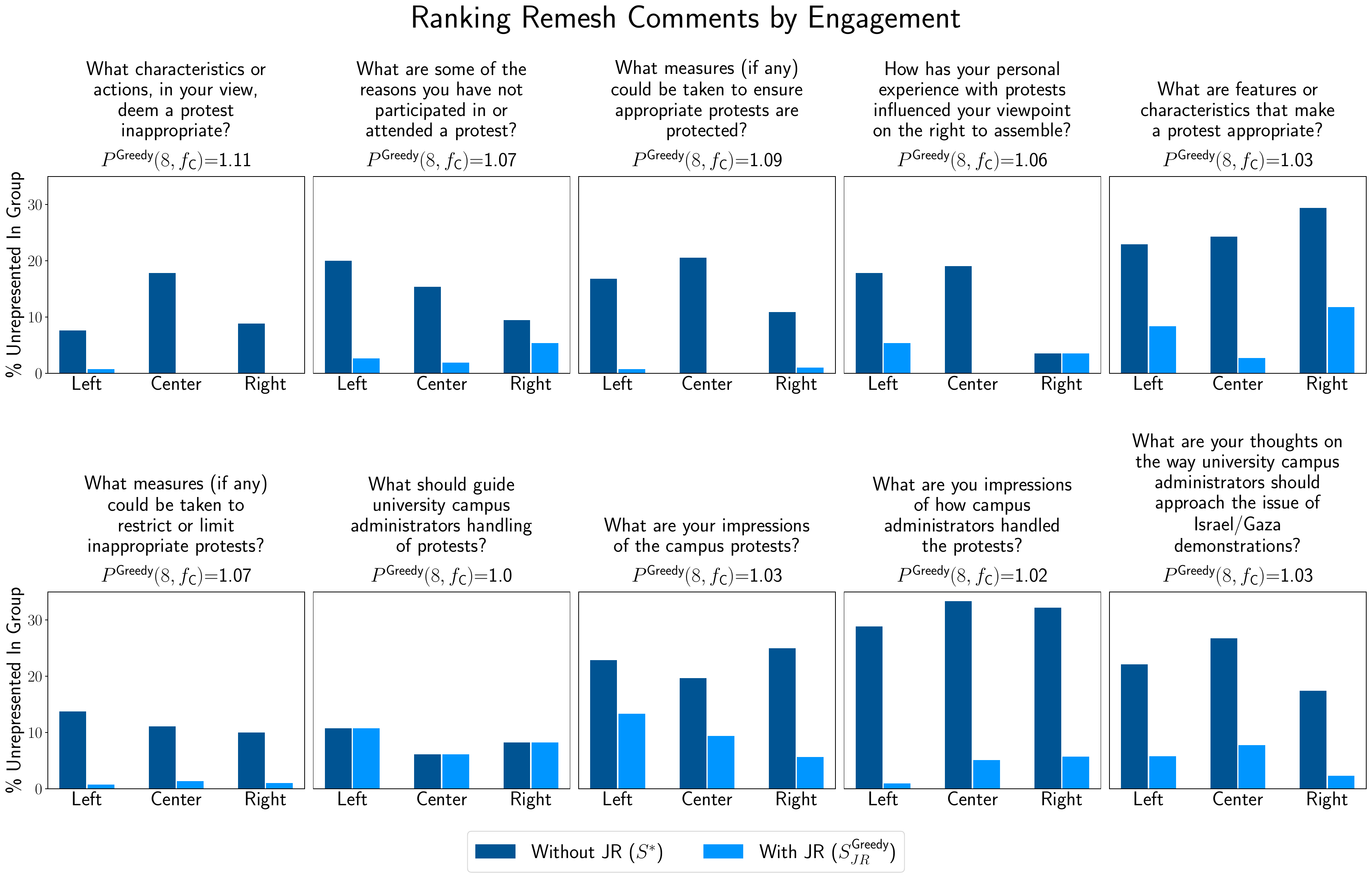}
\caption{\textbf{Representation of political groups when ranking with the diverse approval $\engscore$}}
\end{figure}

\begin{figure}[H]
    \centering
    \includegraphics[width=\linewidth]{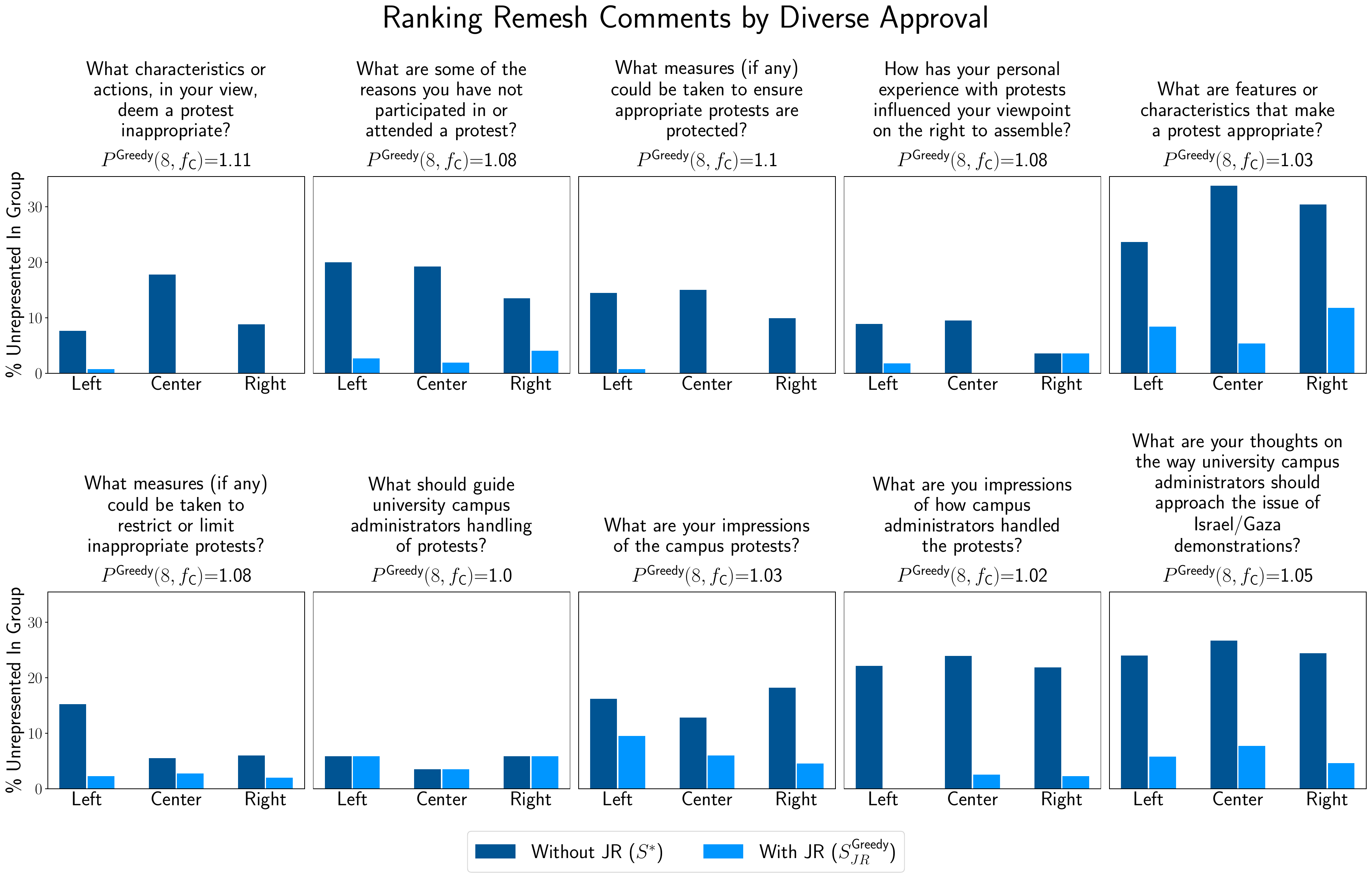}
\caption{\textbf{Representation of political groups when ranking with the diverse approval $\da$}}
\end{figure}

\begin{figure}[H]
    \centering
    \includegraphics[width=\linewidth]{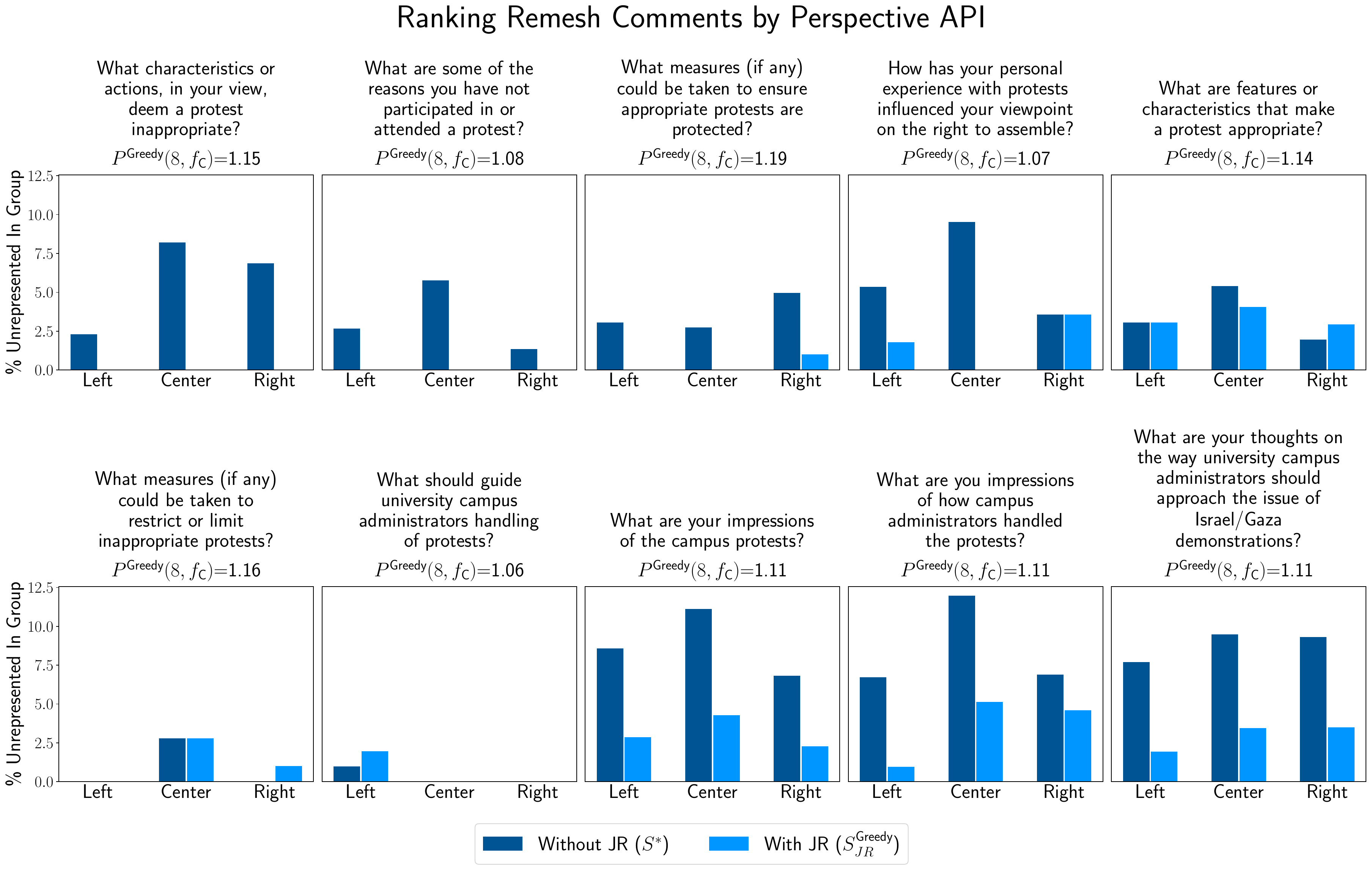}
\caption{\textbf{Representation of political groups when ranking with the Perspective API $\classifier$}}
\end{figure}

\subsection{Individual Attributes from Perspective API}
In \Cref{fig:pers-api-attrs}, we show the representation of users and the price of GreedyCC when ranking with each of the seven individual Perspective API attributes: nuance, compassion, personal story, reasoning, curiosity, affinity, and respect. 
\begin{figure}
    \centering
    \includegraphics[width=
    \linewidth]{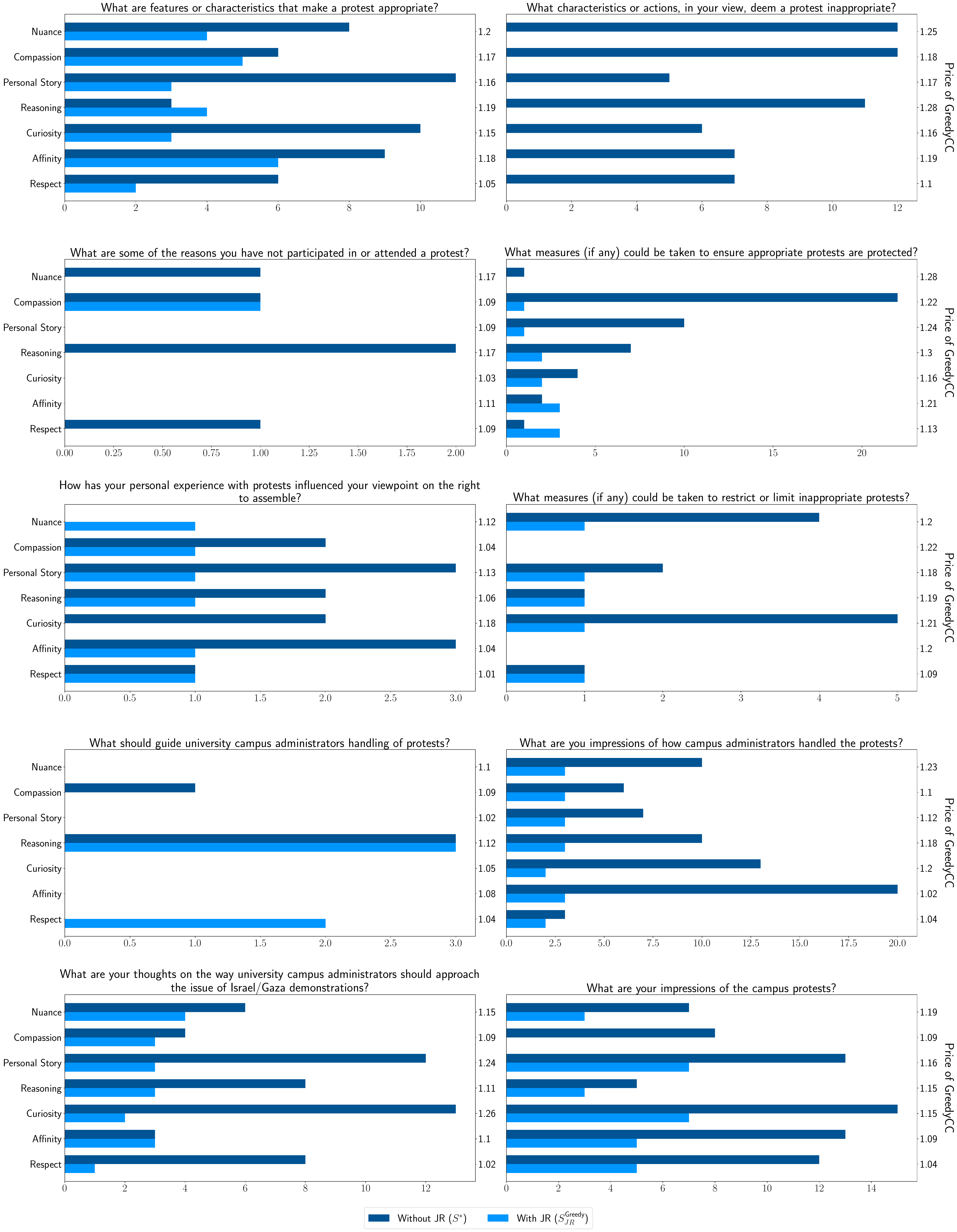}
    \caption{\textbf{Representation of users when ranking with individual Perspective API attributes, both with the JR constraint ($\greedyset$) and without the JR constraint ($S^*$).}}
    \label{fig:pers-api-attrs}
\end{figure}

\newpage
\subsection{The Difference in Comments Selected With and Without JR} 

Lastly, we show an example of the feeds chosen by the three scoring functions; engagement $\engscore$, diverse approval $\da$, and the Perspective API $\classifier$; for the question \textbf{``What measures (if any) could be taken to ensure appropriate protests are protected?''} Overall, the comments chosen by the Perspective API are much longer and more substantial than the comments chosen by engagement or diverse approval.

\begin{table}[h]
\begin{tabular}{p{5cm}p{5cm}p{5cm}}
 \multicolumn{2}{c}{\(\overbrace{\hspace{10cm}}^{\text{$\greedyset(\engscore)$}}\)} & \\
  &  \multicolumn{2}{c}{\(\overbrace{\hspace{10cm}}^{\text{$S^*(\engscore)$}}\)} \\
\multicolumn{1}{c}{\parbox{5cm}{\centering \textbf{Extra comments picked \\ by GreedyCC}}} & 
\multicolumn{1}{c}{\parbox{5cm}{\centering \textbf{Highest scoring comments}}} & 
\multicolumn{1}{c}{\parbox{5cm}{\centering \textbf{Next-highest scoring comments}}} \\
   \midrule
\small
Crowd control; dialogue. & \small Body search and metal detectors for a specific designated assembly location and strong police presence [...]. & \small More of a police presence. \\ \hline
& \small There a no inappropriate protests, only inappropriate protesters.& \\ \cline{2-2}
& \small I think punishing those who make them inappropriate would help.
 & \\ \cline{2-2}
& \small Put harsher punishments for if people do them, fines, jail time, other things. & \\ \cline{2-2}
& \small This is difficult to answer.[...] Heightened police presence should help deter violence though. & \\ \cline{2-2}
& \small Not issuing permits and leaders of the protest publicly denouncing. & \\ \cline{2-2}
& \small Police force to shut these down.
 & \\ \cline{2-2}
\end{tabular}
\caption{\textbf{The engagement-maximizing set $S^*(\engscore)$ and the JR-constrained engagement set $\greedyset(\engscore)$ for the question ``What measures (if any) could be taken to restrict or limit inappropriate protests?'' .}}
\label{tab:feed-eng}
\end{table}

\begin{table}[h]
\begin{tabular}{p{5cm}p{5cm}p{5cm}}
 \multicolumn{2}{c}{\(\overbrace{\hspace{10cm}}^{\text{$\greedyset(\da)$}}\)} & \\
  &  \multicolumn{2}{c}{\(\overbrace{\hspace{10cm}}^{\text{$S^*(\da)$}}\)} \\
\multicolumn{1}{c}{\parbox{5cm}{\centering \textbf{Extra comments picked \\ by GreedyCC}}} & 
\multicolumn{1}{c}{\parbox{5cm}{\centering \textbf{Highest DA comments}}} & 
\multicolumn{1}{c}{\parbox{5cm}{\centering \textbf{Next-highest DA comments}}} \\
   \midrule
\small
Body search and metal detectors for a specific designated assembly location and strong police presence [...]. & \small No one really wants this. Again, it's about people being smart. Without that, we're just spitting in the wind.
 & \small Make rules and regulations stronger \\ \hline
\small Crowd control; dialogue. & \small I think punishing those who make them inappropriate would help.
 & \small Government approval of the protest. \\ \hline
& \small This is difficult to answer. [...] Heightened police presence should help deter violence though.
 & \\ \cline{2-2}
& \small Depends. & \\ \cline{2-2}
& \small This is a ridiculous question. I reject the idea that there is such a thing as an ``inappropriate protest" at a conceptual level. The premise of the question is flawed and patently absurd.
 & \\ \cline{2-2}
& \small Proper police presence and good supervision to ensure existing laws are not broken nor selectively enforced.
 & \\ \cline{2-2}
\end{tabular}
\caption{\textbf{The diverse-approval-maximizing set $S^*(\da)$ and the JR-constrained diverse approval set $\greedyset(\da)$ for the question ``What measures (if any) could be taken to restrict or limit inappropriate protests?''}}
\label{tab:feed}
\end{table}

\begin{table}[h]
\begin{tabular}{p{5cm}p{5cm}p{5cm}}
 \multicolumn{2}{c}{\(\overbrace{\hspace{10cm}}^{\text{$\greedyset(\classifier)$}}\)} & \\
  &  \multicolumn{2}{c}{\(\overbrace{\hspace{10cm}}^{\text{$S^*(\classifier)$}}\)} \\
\multicolumn{1}{c}{\parbox{5cm}{\centering \textbf{Extra comments picked \\ by GreedyCC}}} & 
\multicolumn{1}{c}{\parbox{5cm}{\centering \textbf{Highest scoring comments}}} & 
\multicolumn{1}{c}{\parbox{5cm}{\centering \textbf{Next-highest scoring comments}}} \\
   \midrule
\small
Body search and metal detectors for a specific designated assembly location and strong police presence [...]. & \small I don't think we should limit any protests, the right to assemble must be protected at all costs. [I]t is important that we keep the playing feild level in case we are ever on a side deemed "wrong" we wouldn't want to be denied that right. 
 & \small I don't know these days. People are willing to do anything. It would be nice if no one was violent, but if you lived where I do, you would give up (as I have.)
\\ \hline
\small Crowd control; dialogue. & \small Again, I think this goes to broader cultural issues. We hold fast to our ``right" to protest. However, I think a protest could be restricted or limited if there are already plans for harms against others such as the group having a message of hate or postings about planning violence, of any type, on social media. 
& \small [...] Not having fake news spread over social media and people be less emotional before protesting, level-headed, and respectful. Consequences for violent and hateful actions by the protesters as if they are doing things that are going against the law, its their job to then deal with their consequences and respect police. Notify police before and see if they are available.
 \\ \hline
& \small In this democracy, it is difficult to to even determine what kind of protest is ``inappropriate'' unless the definition is destruction of property, violence, rioting, and that kind of thing. I personally do not believe in absolute freedom of speech, and in particularly feel that Nazi/white supermacist protests are always out of line. [...]
 & \\ \cline{2-2}
& \small I think a greater oversight of protest permits is needed, and it should be okay to limit people from protesting if they have a history of violent behavior in previous protests. Free speech is okay but using it as an excuse for violence and destruction is not.
 & \\ \cline{2-2}
& \small Unfortunately there aren't many steps to be taken. Bad apples are always around and use the protests as a means to create chaos and act inappropriately by being a thief, vandal, assaulter, etc. Sometimes these people are even planted by those opposing the protest to diminish the validity of the original cause. Having protests during daylight hours might help some. 
& \\ \cline{2-2}
& \small I think people are breaking loss. They should be arrested. However, sometimes there is an overstepping as you can see in other countries where people are just arrested for protesting.
 & \\ \cline{2-2}
\end{tabular}
\caption{\textbf{The Perspective API set $S^*(\classifier)$ and the JR-constrained Perspective API set $\greedyset(\classifier)$ for the question ``What measures (if any) could be taken to restrict or limit inappropriate protests?''}}
\label{tab:feed-papi}
\end{table}

\end{document}